\newcommand*{\supfrac}[2]{\frac{#1}{#2}}  % super fraction (in equation)
\newcommand*{\subfrac}[2]{\frac{#1}{#2}}  % (sub-)index fraction (in equation)
\newcommand*{\isupfrac}[2]{#1/#2}  % inline, super fraction
\newcommand*{\isubfrac}[2]{#1/#2}  % inline, (sub-)index fraction
\title{On the mass dependence of the modular operator for a double cone}
\author{%
	Henning Bostelmann\footnote{University of York, Department of Mathematics, York YO10 5DD, United Kingdom; e-mail: \url{henning.bostelmann@york.ac.uk}}, 
	Daniela Cadamuro\footnote{Institut f\"ur theoretische Physik, Universit\"at Leipzig, Br\"uderstra\ss e 16, 04103 Leipzig, Germany; e-mail: \url{cadamuro@itp.uni-leipzig.de}}, 
	Christoph Minz\footnote{Institut f\"ur theoretische Physik, Universit\"at Leipzig, Br\"uderstra\ss e 16, 04103 Leipzig, Germany; e-mail: \url{christoph.minz@itp.uni-leipzig.de}}}
\date{}
\begin{document}

\maketitle

\begin{abstract}
	We present a numerical approximation scheme for the Tomita-Takesaki modular operator of local subalgebras in linear quantum fields, working at one-particle level. This is applied to the local subspaces for double cones in the vacuum sector of a massive scalar free field in $(1+1)$- and $(3+1)$-dimensional Minkowski spacetime, using a discretization of time-0 data in position space. In the case of a wedge region, one component of the modular generator is well known to be a mass-independent multiplication operator; our results strongly suggest that for the double cone, the corresponding component is still at least close to a multiplication operator, but that it is dependent on mass and angular momentum.
\end{abstract}

\section{Introduction}

Since its inception, Tomita-Takesaki modular theory \cite{1970Takesaki} has found important applications in the mathematical formulation of quantum physics. 
This applies both to quantum thermodynamics \cite{1967HaagHugenholtzWinnink,1979BratteliRobinson}, where the modular group is linked to time evolution, and to quantum field theory, where the modular group of local algebras associated with spacelike wedges can be identified with the symmetry group of boosts \cite{1975BisognanoWichmann}; 
this has lead to many more structural insights into quantum field theory \cite{2000Borchers}. 
More recently, interest has arisen because of the importance of the modular generator in relativistic quantum information theory (see, e.g., \cite{2018HollandsSanders,2019CasiniGrilloPontello,2020CeyhanFaulkner,2020CiolliLongoRuzzi}).

However, beyond the wedge case, the concrete form of the modular operator for local algebras $\mathcal{A}( \mathcal{O} )$ in quantum field theory often remains elusive. More concrete information is available in conformal theories \cite{1993BrunettiGuidoLongo,1993FroehlichGabbiani}, though.
The situation also simplifies in \emph{linear} quantum field theories in (pure) quasifree states, where the Tomita operator is of ``second quantized'' form; that is, finding the modular objects for a Weyl subalgebra $\mathcal{A}( \mathcal{O} )$ reduces to finding corresponding ``one-particle'' modular objects for a symplectic subspace $\mathcal{L}( \mathcal{O} )$.
For concreteness, consider a real scalar free field of mass $m \geq 0$ in $d$-dimensional Minkowski spacetime $\Minkowski^{d}$ in its time-0 formulation, and the region $\mathcal{O} \subset \Minkowski^{d}$ being the causal closure of a ``base'' region $\mathcal{B} \subset \Reals^{d - 1}$ in the time-0 plane; thus vectors in $\mathcal{L}( \mathcal{O} )$ are pairs $( f, g )$, with $f, g \in \mathcal{C}_0^\infty( \mathcal{B}, \Reals )$ the initial data of the wave equation.
In this context, the one-particle modular generator acts as a block matrix 
\begin{align}
\label{eq:ModularGenerator.BlockForm}
			- \i \log \Delta
	&= 
			\begin{pmatrix}
				0 & M_- \\
				-M_+ & 0
			\end{pmatrix}
\end{align}
with the operators 
\begin{align}
\label{eq:ModularGenerator.Blocks}
			M_\pm
	&= 2 A^{\pm \supfrac{1}{4}}
			\operatorname{arcoth}\left(
				A^{\supfrac{1}{4}} \chi A^{-\supfrac{1}{4}}
			+ A^{-\supfrac{1}{4}} \chi A^{\supfrac{1}{4}}
			- 1
			\right)
			A^{\pm \supfrac{1}{4}}
	,
\end{align}
where $A$ is the Helmholtz operator $-\Laplace + m^2$, and $\chi$ multiplies with the characteristic function of $\mathcal{B}$. (We will justify this formula in \autoref{prop:ModularFormula}, including some details on the domain of the factors $\chi$).

Equation~\eqref{eq:ModularGenerator.Blocks}, or similar formulas \cite{1989FiglioliniGuido,2020CiolliLongoRuzzi}, determines $\log \Delta$ algorithmically from $A$ and $\chi$. However, evaluating it in examples would require explicit knowledge of the spectral decomposition of the operator in the argument of $\operatorname{arcoth}(\cdot)$, and thus is usually not feasible in practice.

In special situations, as mentioned, $M_\pm$ can be described more explicitly: If $\mathcal{O}$ is a spacelike wedge in $x_1$-direction, then $M_-$ is multiplication with the function $M_-(\mathbf{x})=2\pi x_1$, independent of $m$.
In the massless case ($m=0$), if $\mathcal{B}$ is the ball of radius $r$ (thus $\mathcal{O}$ a double cone), one has $M_-(\mathbf{x}) = \pi \left( r^2 - \| \mathbf{x} \|^2 \right)$ \cite{1982HislopLongo}.
The case of $m=0$ and $\mathcal{O}$ being the forward lightcone can also be treated \cite{1978Buchholz}.

For a double cone in the massive case ($m > 0$), an explicit description of $M_\pm$ has been the subject of much investigation, but without conclusive results.
Recently it has been proposed \cite{2020LongoMorsella,2022Longo} that also for double cones, $M_-$ is actually a multiplication operator independent of $m$, and thus the result for $m=0$ can be employed.
However, since the proof there has a gap,\footnote{
The claim in \cite{2022Longo} that the operator in Eq.~(93) there is anti-Hermitian with respect to  the relevant \emph{complex} scalar product has turned out to be false.} we consider the problem still open.

In this paper, rather than attempting to make explicit guesses for $M_-$, we approach the problem of the massive modular generator for double cones with \emph{numerical} methods. 
That is, we evaluate formula \eqref{eq:ModularGenerator.Blocks} numerically, discretizing the integral kernels of the operators $\chi$ and $A$ in a suitable basis, and thus replacing the problem with one in matrix algebra. 
In this way, we find an approximation to the (integral kernels of the) operators $M_\pm$ for various parameters; in particular, we investigate the mass dependence of $M_-$.

To that end, we will first recall the formal framework of modular operators for one-particle structures in \autoref{sec:ModularOperatorOneParticle}. We then explain our numerical approach to evaluating the expression \eqref{eq:ModularGenerator.Blocks} in \autoref{sec:NumericalConcept}; we compare the approximation to existing results for $(1 + 1)$-dimensional wedge regions as our test case (\autoref{sec:D2RightWedge}). As our main result, we apply the same methods to the modular generator of a double cone in $1 + 1$ dimensions (\autoref{sec:D2DoubleCone}) and in $3 + 1$ dimensions (\autoref{sec:D4DoubleCone}). We end with a discussion and outlook in \autoref{sec:Conclusion}. The computer code used for producing the results is provided as supplemental material to this paper.

\section{Modular operators for one-particle structures}
\label{sec:ModularOperatorOneParticle}

In this paper, we study the modular operator of local algebras for linear quantum field theories in a ground state (vacuum) representation; these can be described on the ``one-particle level'' via second quantization.
The one-particle space, denoted $\mathcal{H}$ below, is the solution space of a second-order wave equation, here taken to be of the form $( \partial_t^2 + A ) \phi = 0$; it can be parametrized by two pieces of time-0 initial data, and is equipped with a suitable Hilbert space scalar product.
We briefly repeat the mathematical background.

\begin{definition}\label{def:ops}
 A \emph{one-particle structure} $(\mathcal{H}_\mathrm{r}, \mathcal{L}_\mathrm{r}, A)$ is given by a separable real Hilbert space $\mathcal{H}_\mathrm{r}$, a closed subspace $\mathcal{L}_\mathrm{r} \subset \mathcal{H}_\mathrm{r}$, and a positive (possibly unbounded, but densely defined) operator $A$ on $\mathcal{H}_\mathrm{r}$ with dense range, 
 with the following property:
 
   For any $s \in \Reals$, denote by $\overline{\phantom{m}}^s$ the closure in the norm $\| \cdot \|_s := \| A^s \cdot \|_{\mathcal{H}_\mathrm{r}}$, and set
	\begin{align}
				\mathcal{H}_\mathrm{r}^s
		&:= \overline{\dom A^s}^{s}
		,
&
				\mathcal{L}_\mathrm{r}^s
		&:= \overline{\mathcal{L}_\mathrm{r} \cap \dom A^s}^{s}
		\subset \mathcal{H}_\mathrm{r}^s
		,
&
				\mathcal{L}_\mathrm{r}^{\perp, s}
		&:= \overline{\mathcal{L}_\mathrm{r}^\perp \cap \dom A^s}^{s}
		\subset \mathcal{H}_\mathrm{r}^s
		,
	\end{align}
	where $\mathcal{L}_\mathrm{r}^\perp$ is the orthogonal complement of $\mathcal{L}_\mathrm{r}$. 
  Further, denote  by $\langle \cdot, \cdot\rangle_{s}$ the dual pairing between $\mathcal{H}_\mathrm{r}^s$ and $\mathcal{H}_\mathrm{r}^{-s}$ arising from the scalar product $\langle \cdot, \cdot\rangle$ in $\mathcal{H}_\mathrm{r}$, 
  and by $\vphantom{m}^\circ$ the polars for this dual pairing. Then we demand for $s =\pm \frac{1}{4}$ that
  \begin{subequations}
  \label{eq:SubspaceProperties}
	\begin{align}
	\label{eq:Lperpinter}
				\mathcal{L}_\mathrm{r}^{s}
			\cap \mathcal{L}_\mathrm{r}^{\perp ,s}
		&= \{ 0 \}
		;
	\\
	\label{eq:Lpolar}
				( \mathcal{L}_\mathrm{r}^{s} )^{\circ}
		&= \mathcal{L}_\mathrm{r}^{\perp,-s}
		;
	\\
	\label{eq:Lintersect}
				A^{s} \mathcal{L}_\mathrm{r}^{s}
			\cap A^{-s} \mathcal{L}_\mathrm{r}^{-s} 
		&= \{ 0 \}
		= A^{s} \mathcal{L}_\mathrm{r}^{\perp,s}
			\cap A^{-s} \mathcal{L}_\mathrm{r}^{\perp, -s}
		.
	\end{align}
  \end{subequations}

\end{definition}
\noindent
(Note here that $A^{t}$, $t \in \Reals$, naturally extends to a bounded operator from $\mathcal{H}_\mathrm{r}^s$ to $\mathcal{H}_\mathrm{r}^{s-t}$.)

The canonical example is the real scalar free field in time-0 formulation, where $\mathcal{H}_\mathrm{r} := \mathrm{L}^2_{\Reals}( \Reals^{d-1} )$ with $d \geq 2$ (with the usual real inner product), $A := -\Laplace + m^2$ with some $m>0$ (or $m \geq 0$ if $d \geq 3$), and $\mathcal{L}_\mathrm{r} := \mathrm{L}^2_\Reals( \mathcal{B} )$ for a suitable region $\mathcal{B} \subset \Reals^{d-1}$.
For the properties \eqref{eq:SubspaceProperties} in the case $d = 4$, see \cite[Sec.~2]{1989FiglioliniGuido}; the case $d=2$ is analogous.

Given a one-particle structure, we define the space $\mathcal{H} = \mathcal{H}_\mathrm{r}^{\isupfrac{1}{4}} \oplus \mathcal{H}_\mathrm{r}^{-\isupfrac{1}{4}}$ equipped with the complex structure
\begin{align}
			i_{A}
	&:=	\begin{pmatrix}
			0 & A^{-\supfrac{1}{2}} \\
			-A^{\supfrac{1}{2}} & 0
			\end{pmatrix}
	.
\end{align}
The complex-linear scalar product $( f, g \in \mathcal{H} )$
\begin{align}
			\innerProd[\mathcal{H}]{f}{g}
	&:= \innerProd[\subfrac{1}{4} \oplus -\subfrac{1}{4}]{f}{\left( A^{\supfrac{1}{2}} \oplus A^{-\supfrac{1}{2}} \right) g}
		+ \i \innerProd[\subfrac{1}{4} \oplus -\subfrac{1}{4}]{f}{\sigma g}
	,
&
			\sigma
	&= \begin{pmatrix}
			0 & 1 \\
			-1 & 0
		\end{pmatrix}
	,
\end{align}
makes $\mathcal{H}$ into a complex Hilbert space.

In $\mathcal{H}$, consider the subspace $\mathcal{L} := \mathcal{L}_\mathrm{r}^{\isupfrac{1}{4}} \oplus \mathcal{L}_\mathrm{r}^{-\isupfrac{1}{4}}$ and its symplectic complement $\mathcal{L}'$ w.r.t.\ the symplectic form $\Im\langle \cdot,\cdot \rangle_\mathcal{H}$. Straightforward computations using \eqref{eq:SubspaceProperties} yield (cf.~\cite[Prop.~2.7]{1989FiglioliniGuido}):
\begin{lemma}\label{lemma:ellprime}
 One has $\mathcal{L}' = \mathcal{L}_\mathrm{r}^{\perp, \isupfrac{1}{4}} \oplus \mathcal{L}_\mathrm{r}^{\perp, -\isupfrac{1}{4}}$.
 The subspace $\mathcal{L}\subset\mathcal{H}$ is standard ($\mathcal{L} \cap i_A \mathcal{L} = \{0\}$, $\overline{\mathcal{L} + i_A \mathcal{L}}=\mathcal{H}$) and factorial ($\mathcal{L} \cap \mathcal{L}' = \{0\}$).
\end{lemma}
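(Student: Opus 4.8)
The plan is to reduce every assertion to a statement in one of the two real scale spaces $\mathcal{H}_\mathrm{r}^{1/4}$, $\mathcal{H}_\mathrm{r}^{-1/4}$, using three structural facts: that $\mathcal{H}=\mathcal{H}_\mathrm{r}^{1/4}\oplus\mathcal{H}_\mathrm{r}^{-1/4}$ and that each of $\mathcal{L}$, $\mathcal{L}'$ and $i_A\mathcal{L}$ splits as a direct sum across these summands; that each $A^{t}$ is a bounded bijection $\mathcal{H}_\mathrm{r}^{s}\to\mathcal{H}_\mathrm{r}^{s-t}$, in particular injective, so that it carries intersections to intersections; and that the only nontrivial input is \eqref{eq:SubspaceProperties}, with each of its three lines feeding exactly one claim. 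The preparatory step is to write the symplectic form explicitly: for $f=(f_1,f_2)$ and $g=(g_1,g_2)$ one reads off $\Im\langle f,g\rangle_\mathcal{H}=\langle f,\sigma g\rangle_{1/4\oplus-1/4}=\langle f_1,g_2\rangle_{1/4}-\langle f_2,g_1\rangle_{-1/4}$, so that it pairs the first component of one vector with the second of the other through the dual pairings.

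The formula for $\mathcal{L}'$ is then immediate. Imposing $\Im\langle g,h\rangle_\mathcal{H}=0$ for all $g=(g_1,g_2)\in\mathcal{L}_\mathrm{r}^{1/4}\oplus\mathcal{L}_\mathrm{r}^{-1/4}$ and varying $g_1,g_2$ independently forces $h_2\in(\mathcal{L}_\mathrm{r}^{1/4})^\circ$ and $h_1\in(\mathcal{L}_\mathrm{r}^{-1/4})^\circ$; by the polar identity \eqref{eq:Lpolar} these polars are $\mathcal{L}_\mathrm{r}^{\perp,-1/4}$ and $\mathcal{L}_\mathrm{r}^{\perp,1/4}$ respectively, giving $\mathcal{L}'=\mathcal{L}_\mathrm{r}^{\perp,1/4}\oplus\mathcal{L}_\mathrm{r}^{\perp,-1/4}$.

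Factoriality follows at once, since $\mathcal{L}\cap\mathcal{L}'$ splits as $(\mathcal{L}_\mathrm{r}^{1/4}\cap\mathcal{L}_\mathrm{r}^{\perp,1/4})\oplus(\mathcal{L}_\mathrm{r}^{-1/4}\cap\mathcal{L}_\mathrm{r}^{\perp,-1/4})$, and both summands vanish by \eqref{eq:Lperpinter}. For the separating property I would compute $i_A\mathcal{L}=A^{-1/2}\mathcal{L}_\mathrm{r}^{-1/4}\oplus A^{1/2}\mathcal{L}_\mathrm{r}^{1/4}$ and intersect with $\mathcal{L}$ componentwise; applying the injective map $A^{1/4}$ on the first component and $A^{-1/4}$ on the second turns both conditions into $A^{1/4}\mathcal{L}_\mathrm{r}^{1/4}\cap A^{-1/4}\mathcal{L}_\mathrm{r}^{-1/4}=\{0\}$, which is the first equality in \eqref{eq:Lintersect}. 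For cyclicity I would pass to orthogonal complements with respect to $\Re\langle\cdot,\cdot\rangle_\mathcal{H}$: because $i_A$ is real-orthogonal with $i_A^2=-1$, the symplectic complement satisfies $\mathcal{L}'=(i_A\mathcal{L})^{\perp_\Reals}$ and hence $\mathcal{L}^{\perp_\Reals}=i_A\mathcal{L}'$, so that $(\mathcal{L}+i_A\mathcal{L})^{\perp_\Reals}=\mathcal{L}^{\perp_\Reals}\cap\mathcal{L}'=\mathcal{L}'\cap i_A\mathcal{L}'$. This last intersection reduces, by the same $A^{\pm1/4}$ manipulation, to the \emph{second} equality in \eqref{eq:Lintersect} (the one for the complements), whence $\overline{\mathcal{L}+i_A\mathcal{L}}=\mathcal{H}$; in other words cyclicity of $\mathcal{L}$ is nothing but the separating property of $\mathcal{L}'$.

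I expect the difficulty to be bookkeeping rather than depth. The care points are: keeping track of which scale space each vector and each image $A^{\pm1/2}\mathcal{L}_\mathrm{r}^{\mp1/4}$ lives in; checking that polars and intersections are formed in the correct dual pairing before \eqref{eq:SubspaceProperties} is invoked; and justifying that $A^{t}$ may be pulled through an intersection, which rests only on its injectivity. The single genuinely structural ingredient is the identification $\mathcal{L}'=(i_A\mathcal{L})^{\perp_\Reals}$ of the symplectic complement with a real-orthogonal complement; once it is in place, each of standardness and factoriality collapses onto one line of \eqref{eq:SubspaceProperties}.
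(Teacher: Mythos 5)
Your argument is correct and is precisely the ``straightforward computation using \eqref{eq:SubspaceProperties}'' that the paper invokes without writing out (it only cites Prop.~2.7 of Figliolini--Guido): the componentwise identification of $\mathcal{L}'$ via \eqref{eq:Lpolar}, factoriality from \eqref{eq:Lperpinter}, and the two halves of \eqref{eq:Lintersect} feeding the separating and cyclic properties respectively, with cyclicity reduced to the separating property of $\mathcal{L}'$ through $\mathcal{L}'=(i_A\mathcal{L})^{\perp_\Reals}$. No gaps; the bookkeeping with the scale spaces and the injectivity of $A^{t}$ is exactly the care required.
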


Thus we can define the (one-particle) Tomita operator $T$ with respect to $\mathcal{L}$ as the closure of 
\begin{align}
			f + \i g
	&\mapsto f - \i g
&
			\text{for~} f, g
	&\in \mathcal{L}
	.
\end{align}
The polar decomposition of $T$, written as $T = J \Delta^{\isupfrac{1}{2}}$, is the object of interest in this paper. 

In quantum field theory, one is then interested in the Weyl algebra $\mathfrak{W}(\mathcal{H})$ in its Fock representation arising from $\Re \langle \cdot, \cdot \rangle_\mathcal{H}$, and in the subalgebra $\mathfrak{W}(\mathcal{L})\subset \mathfrak{W}(\mathcal{H})$. The Tomita-Takesaki modular operator of $\mathfrak{W}(\mathcal{L})$ with respect to the Fock vacuum is the second quantization $\Gamma(\Delta)$ of the one-particle modular operator $\Delta$ from above, for example, see \cite{1989FiglioliniGuido}. For the purposes of this article, the Fock representation will play no role, and we focus our attention on the one-particle structure.

Our aim is to find an explicit form of the modular operator $\Delta$ in terms of $A$ and $\mathcal{L}_\mathrm{r}$. This is in fact possible in terms of spectral calculus, 
as first shown in \cite{1989FiglioliniGuido} for the real scalar field. To that end, certain projectors onto $\mathcal{L}$ and $\mathcal{L}_\mathrm{r}^{\pm \isupfrac{1}{4}}$ play an important role. Denote by $P$ the projector in $\mathcal{H}$ with image $\mathcal{L}$ and kernel $\mathcal{L}'$; on $\dom P = \mathcal{L} + \mathcal{L}'$, it is closed but in general unbounded. Also, denote by $\chi$ the \emph{orthogonal} projector onto $\mathcal{L}_\mathrm{r}\subset\mathcal{H}_\mathrm{r}$. Thanks to \eqref{eq:Lperpinter}, for $s = \pm \frac{1}{4}$, we can likewise define the closed projectors $\chi_s$ in $\mathcal{H}_\mathrm{r}^s$ with image $\mathcal{L}_\mathrm{r}^s$ and kernel $\mathcal{L}_\mathrm{r}^{\perp,s}$, with $\dom \chi_{s} = \mathcal{L}_\mathrm{r}^s + \mathcal{L}_\mathrm{r}^{\perp,s}$. 
We now state:
\begin{proposition}
\label{prop:ModularFormula}
	For a one-particle structure $(\mathcal{H}_\mathrm{r},\mathcal{L}_\mathrm{r},A)$, let $B$ be the operator on $\mathcal{H}_\mathrm{r}$ given as
	\begin{subequations}
	\label{eq:coth}
	\begin{align}
			B
		&= A^{\supfrac{1}{4}} \chi_{\subfrac{1}{4}} A^{-\supfrac{1}{4}}
			+ A^{-\supfrac{1}{4}} \chi_{-\subfrac{1}{4}} A^{\supfrac{1}{4}}
			- 1
		,
	\\
			\dom B
		&= \left(
					A^{\supfrac{1}{4}}
					\left(
						\mathcal{L}_\mathrm{r}^{\supfrac{1}{4}}
					+ \mathcal{L}_\mathrm{r}^{\perp, \supfrac{1}{4}}
					\right)
				\right)
			\cap \left(
					A^{-\supfrac{1}{4}}
					\left(
						\mathcal{L}_\mathrm{r}^{-\supfrac{1}{4}}
					+ \mathcal{L}_\mathrm{r}^{\perp, -\supfrac{1}{4}}
					\right)
				\right)
			.
	\end{align}
	\end{subequations}
	$B$ is essentially self-adjoint, and denoting its closure by the same symbol, one has
	\begin{align}
	\label{eq:modGenFormula}
				\log \Delta
		&= i_A \begin{pmatrix}
					0 & M_- \\
					-M_+ & 0
				\end{pmatrix}
		,
	&
		\text{where}\quad
				M_\pm
		&= 2 A^{\pm \supfrac{1}{4}} \operatorname{arcoth}(B) A^{\pm \supfrac{1}{4}}
		.
	\end{align}
\end{proposition}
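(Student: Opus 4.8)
The plan is to reduce \eqref{eq:modGenFormula} to an abstract identity expressing the one-particle modular operator $\Delta$ through the (generally unbounded) skew projection $P$ with image $\mathcal{L}$ and kernel $\mathcal{L}'$, and then to insert the explicit block form of $P$. By \autoref{lemma:ellprime} one has $\mathcal{L} = \mathcal{L}_\mathrm{r}^{\isupfrac{1}{4}} \oplus \mathcal{L}_\mathrm{r}^{-\isupfrac{1}{4}}$ and $\mathcal{L}' = \mathcal{L}_\mathrm{r}^{\perp,\isupfrac{1}{4}} \oplus \mathcal{L}_\mathrm{r}^{\perp,-\isupfrac{1}{4}}$, so that $P$ is block-diagonal, $P = \operatorname{diag}(\chi_{\subfrac{1}{4}}, \chi_{-\subfrac{1}{4}})$, directly from the definition of the closed projectors $\chi_{\pm\subfrac{1}{4}}$.

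The abstract identity I would establish first is $\coth\!\left(\tfrac{1}{2}\log\Delta\right) = 1 - 2 P_+$, where $P_+ := \tfrac{1}{2}(P - i_A P i_A)$ is the complex-linear part of the real-linear operator $P$. To obtain it, recall that $\mathcal{L} = \{\, h : T h = h \,\}$ with $T = J \Delta^{\isupfrac{1}{2}}$, while $\mathcal{L}'$ is itself standard with Tomita operator $F = T^* = J \Delta^{-\isupfrac{1}{2}}$, so $\mathcal{L}' = \{\, h : F h = h \,\}$. For $h = a + b$ with $a \in \mathcal{L}$ and $b \in \mathcal{L}'$, the relations $J a = \Delta^{\isupfrac{1}{2}} a$ and $J b = \Delta^{-\isupfrac{1}{2}} b$ yield the system $a + b = h$, $J h = \Delta^{\isupfrac{1}{2}} a + \Delta^{-\isupfrac{1}{2}} b$, which solves to
\[
	P h = a = \bigl(\Delta^{\isupfrac{1}{2}} - \Delta^{-\isupfrac{1}{2}}\bigr)^{-1}\bigl(J - \Delta^{-\isupfrac{1}{2}}\bigr) h .
\]
Only the $J$-free summand is complex-linear, hence $P_+ = (1 - \Delta)^{-1}$ and $1 - 2 P_+ = (1+\Delta)(\Delta - 1)^{-1} = \coth(\tfrac{1}{2}\log\Delta)$, so that $\tfrac{1}{2}\log\Delta = \operatorname{arcoth}(1 - 2P_+)$.

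It then remains to compute $1 - 2 P_+$ in block form and match it to $B$. A short calculation with the off-diagonal $i_A$ shows that $P_+$ is block-diagonal with $(1,1)$-block $X = \tfrac{1}{2}\bigl(\chi_{\subfrac{1}{4}} + A^{-\subfrac{1}{2}} \chi_{-\subfrac{1}{4}} A^{\subfrac{1}{2}}\bigr)$; conjugating gives $A^{\subfrac{1}{4}}(1 - 2X) A^{-\subfrac{1}{4}} = 1 - \bigl(A^{\subfrac{1}{4}}\chi_{\subfrac{1}{4}}A^{-\subfrac{1}{4}} + A^{-\subfrac{1}{4}}\chi_{-\subfrac{1}{4}}A^{\subfrac{1}{4}}\bigr) = -B$. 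Since $\operatorname{arcoth}$ is odd and commutes with this similarity, the $(1,1)$-block of $\tfrac{1}{2}\log\Delta$ is $\operatorname{arcoth}(1 - 2X) = -A^{-\subfrac{1}{4}}\operatorname{arcoth}(B)A^{\subfrac{1}{4}}$, which is exactly $\tfrac{1}{2}$ times the $(1,1)$-entry $-A^{-\subfrac{1}{2}}M_+$ of $i_A\bigl(\begin{smallmatrix}0 & M_-\\ -M_+ & 0\end{smallmatrix}\bigr)$ once $M_+ = 2 A^{\subfrac{1}{4}}\operatorname{arcoth}(B)A^{\subfrac{1}{4}}$ is inserted; the $(2,2)$-block reproduces $M_-$ in the same way.

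The analytic content, and the main obstacle, lies in making the above rigorous for unbounded operators distributed over the different completions $\mathcal{H}_\mathrm{r}^{s}$. Two points must be handled with care. First, the essential self-adjointness of $B$ on the domain \eqref{eq:coth}: I would write $B = C + C^* - 1$ with $C = A^{\subfrac{1}{4}}\chi_{\subfrac{1}{4}}A^{-\subfrac{1}{4}}$, obtaining $C^* = A^{-\subfrac{1}{4}}\chi_{-\subfrac{1}{4}}A^{\subfrac{1}{4}}$ from the polar duality \eqref{eq:Lpolar} between $\mathcal{L}_\mathrm{r}^{s}$ and $\mathcal{L}_\mathrm{r}^{\perp,-s}$, which makes $B$ symmetric; essential self-adjointness would then follow using \eqref{eq:Lperpinter} and \eqref{eq:Lintersect}, equivalently by transporting the manifest self-adjointness of $\coth(\tfrac{1}{2}\log\Delta)=1-2P_+$ across the unitaries $A^{\pm\subfrac{1}{4}}$ relating the completions. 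Second, the formal inversions of $\Delta - 1$, $\Delta^{\isupfrac{1}{2}} - \Delta^{-\isupfrac{1}{2}}$ and the passage through $\operatorname{arcoth}$ must be justified on dense domains, including the behaviour near the spectral value $\Delta = 1$, where $B$ is unbounded; here the factoriality $\mathcal{L} \cap \mathcal{L}' = \{0\}$ from \autoref{lemma:ellprime} excludes the eigenvalue $1$ of $\Delta$ and keeps $P$ (hence the whole chain of identities) well defined on $\mathcal{L} + \mathcal{L}'$. The algebra is short; this domain and functional-calculus bookkeeping is the hard part.
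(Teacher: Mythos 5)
Your proposal is correct and follows essentially the same route as the paper: there one sets $Q := P - i_A P i_A - 1 = (1+\Delta)(1-\Delta)^{-1}$ (which is $2P_+ - 1 = -\coth(\tfrac{1}{2}\log\Delta)$ in your notation) and conjugates the block form $P = \chi_{\subfrac{1}{4}} \oplus \chi_{-\subfrac{1}{4}}$ by the unitary $A^{\supfrac{1}{4}} \oplus A^{-\supfrac{1}{4}}$ to obtain $B \oplus B$, exactly as you do blockwise. The domain and functional-calculus issues you rightly flag as ``the hard part'' (the validity of $P = (1+T)(1-\Delta)^{-1}$ on a suitable core and the resulting essential self-adjointness) are not worked out in the paper either, but delegated to \cite[Theorem~2.2]{2020CiolliLongoRuzzi}, which supplies the core $\mathcal{D}_0$ of vectors with compact spectral support away from $0$ and $1$.
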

\begin{proof}
It has been shown in \cite[Theorem~2.2]{2020CiolliLongoRuzzi} that for the projector $P$,
\begin{equation}
\label{eq:pformula}
			P \restriction \mathcal{D}_0 = (1+T) \frac{1}{1-\Delta} \restriction \mathcal{D}_0,
\end{equation}
where the domain $\mathcal{D}_0$ contains all vectors of compact spectral support (w.r.t.~$\Delta$) not containing 0 and 1; and $\mathcal{D}_0$ is a core for $P$.
Now defining the complex-linear operator $Q := P - i_A P i_A -1$, first on $\mathcal{D}_0$, we have
\begin{equation}\label{eq:QDelta}
   Q \restriction \mathcal{D}_0 = \frac{1+\Delta}{1-\Delta} \restriction \mathcal{D}_0.
\end{equation}
Now clearly the r.h.s.~is essentially self-adjoint, and therefore $Q$ is. The domain of its self-adjoint closure contains at least
\begin{equation}
  \mathcal{D}_1 = \dom( P) \cap \dom (i_A P i_A) = (\mathcal{L} + \mathcal{L}') \cap (i_A\mathcal{L} + i_A\mathcal{L}') \supset \mathcal{D}_0.   
\end{equation}
Since relation \eqref{eq:QDelta} holds analogously for the closure of $Q$, denoted by the same symbol, we find by spectral calculus
\begin{equation}
\label{eq:logdelta}
   \log \Delta = - 2 \operatorname{arcoth} Q.
\end{equation}
To simplify this, we consider the complex Hilbert space $\hat{ \mathcal{H} }= \mathcal{H}_\mathrm{r} \oplus \mathcal{H}_\mathrm{r}$ equipped with the complex structure given by $\sigma$ above, and the standard complex scalar product arising from $\mathcal{H}_\mathrm{r}$ and $\sigma$. The map $U = A^{\isupfrac{1}{4}} \oplus A^{-\isupfrac{1}{4}}: \mathcal{H} \to \hat{\mathcal{H}}$ is complex-linear and unitary.
Due to \autoref{lemma:ellprime}, we have $P=\chi_{1/4}\oplus \chi_{-1/4}$; thus
\begin{align}
			U P U^{-1}
	&= A^{\supfrac{1}{4}} \chi_{\subfrac{1}{4}} A^{-\supfrac{1}{4}}
			\oplus A^{-\supfrac{1}{4}} \chi_{-\subfrac{1}{4}} A^{\supfrac{1}{4}}. 
\end{align}
From there one obtains
\begin{align}
			U \mathcal{D}_1
	&= \dom B \oplus \dom B
	,
&
			U Q U^{-1} \restriction U \mathcal{D}_1
	&= B \oplus B
	.
\end{align}
Thus $B$ is essentially self-adjoint (since $Q$ is); and 
inverting $U$ and inserting into \eqref{eq:logdelta} gives the proposed result.  
\end{proof}

\section{Numerical approach}
\label{sec:NumericalConcept}

Our overall approach to find a concrete form of the modular generator $\log \Delta$ is to approximate an infinite-dimensional one-particle structure $(\mathcal{H}_\mathrm{r}, \mathcal{L}_\mathrm{r}, A)$, as arising in quantum field theory, with a sequence of \emph{finite-dimensional} ones $(\mathcal{H}_\mathrm{r}^{(n)}, \mathcal{L}_\mathrm{r}^{(n)}, A^{(n)})$, in such a way that the corresponding operators $\log \Delta^{(n)}$, or $M_\pm^{(n)}$, approximate $\log \Delta$ and $M_\pm$. 
Roughly speaking, we would choose suitable finite-dimensional orthogonal projectors $P^{(n)}$ in $\mathcal{H}_\mathrm{r}$, and define our ``discretized'' objects as 
\begin{align}
\label{eq:DiscConcept}
			\mathcal{H}_\mathrm{r}^{(n)}
	&:= P^{(n)} \mathcal{H}_\mathrm{r}
	,
&
			\mathcal{L}_\mathrm{r}^{(n)}
	&:= P^{(n)} \mathcal{L}_\mathrm{r}
	,
&
			A^{(n)}
	&:= \left( P^{(n)} A^{-s} P^{(n)} \right)^s
\end{align}
for some $s>0$; the reason for discretizing inverse powers of $A$ will become clear below.
In the finite-dimensional setting, the topological closures in \autoref{def:ops} do not play a role, and the $\chi_s^{(n)}$ all equal the orthogonal projector $\chi^{(n)}$.
Choosing an orthonormal basis $\left\{ e_j^{(n)} \right\}$ of $\mathcal{H}_\mathrm{r}^{(n)} = \mathcal{L}_\mathrm{r}^{(n)} + \mathcal{L}_\mathrm{r}^{\perp (n)}$, the problem of evaluating $B^{(n)}$ and $M_\pm^{(n)}$ according to \eqref{eq:modGenFormula} then reduces to a problem in matrix computation, which we treat with the usual methods of numerical linear algebra, including a numerical eigendecomposition of $B^{(n)}$ by Householder transformations and the QR algorithm (see, e.g., \cite[Ch.~8]{2013GolubVanLoan}).

In the present article, we make no attempt at rigorously establishing the convergence of $\log \Delta^{(n)}$ to $\log \Delta$; rather we show (in \autoref{sec:D2RightWedge}) that the numerical approximation scheme gives reasonable results in a well known example. However, let us add some heuristic remarks.

In our applications, $\mathcal{H}_\mathrm{r}$ will normally be a real-valued $\mathrm{L}^2$-space with some measure $\d{\mu}$. Our task is therefore to approximate the integral kernel $M_\pm( x, y )$ of the operator $M_\pm$. In certain examples, $M_-$ is a multiplication operator while $M_+$ is a second-order differential operator. As the kernel of $M_-$ is hence expected to be less singular in general, we focus our attention on this case; since $M_-$ determines $M_+$ via $M_+ = A^{\isupfrac{1}{2}} M_- A^{\isupfrac{1}{2}}$, this operator still contains all required information.

However, even for this example of $M_-$, its kernel is not a smooth function, but a distribution concentrated on the diagonal. Hence we \emph{cannot} expect that
\begin{equation}
  M_-^{(n)} (x,y):= \sum_{j,k} e_j^{(n)}(x)  \innerProd{ e_j^{(n)} }{ M_-^{(n)} e_k^{(n)} } e_k^{(n)}(y) \xrightarrow{n \to \infty} M_-(x,y)
\end{equation}
in the sense of pointwise convergence. Rather, we can expect this relation to be true in the weak sense: For sufficiently regular test functions $h$, $h'$, we should have
\begin{equation}\label{eq:weakConv}
    \sum_{j,k} \innerProd{ h}{ e_j^{(n)} } \innerProd{ e_j^{(n)} }{ M_-^{(n)} e_k^{(n)} } \innerProd{ e_k^{(n)}}{ h'} 
     \xrightarrow{n \to \infty} \iint h(x) M_-( x, y ) h'(y) \id{\mu( x )} \id{\mu( y )}.
\end{equation}
This point does show up in the numerical results, and will become important in our quantitative comparisons later on.

Another, somewhat technical remark is in order: For the numerical evaluation of the $\operatorname{arcoth}$ function in \eqref{eq:coth}, it is of course required that the spectrum of the discretized operator $B^{(n)}$ falls into $(-\infty,-1)\cup (1,\infty)$.
As long as the finite-dimensional structure still fulfills all conditions of \autoref{def:ops}, this is indeed the case, as e.g.~our computation in \autoref{sec:ModularOperatorOneParticle} shows.
However, in practice it turns out that the eigenvalues of $B^{(n)}$ are \emph{extremely} close to $\pm 1$, to the extent that round-off errors in the usual floating point precision lead to eigenvalues outside the allowed range.
A similar problem has been observed in the numerical studies of entanglement Hamiltonians in \cite[Sec.~3]{2022JaverzatTonni}.
For our results, we circumvent this problem by using an increased floating point precision of 450 decimal digits in two space-time dimensions and 640 digits in four dimensions.

We now need to choose suitable basis vectors $e_j^{(n)}$ in concrete examples of the one-particle structure $(\mathcal{H}_\mathrm{r},\mathcal{L}_\mathrm{r},A)$
such that \emph{both} $\mathcal{L}_\mathrm{r}^{(n)}$ and $A^{(n)}$ are good approximations of their infinite-dimensional counterparts. We will first explain this choice, and the numerical results, in the well known example of a wedge region for a massive scalar field in $1 + 1$ dimensions (\autoref{sec:D2RightWedge}). We then make suitable modifications to find the modular generator for double cone regions in $1 + 1$ and $3 + 1$ dimensions (Sections~\ref{sec:D2DoubleCone} and \ref{sec:D4DoubleCone}).

\section{Test case: The wedge}
\label{sec:D2RightWedge}

In this section, we apply our numerical approach to a scalar field on $(1 + 1)$-dimensional Minkowski space with respect to the right wedge region,
$\mathcal{O} = \{ ( t, x ) \in \Reals^2 \mid	x > \lvert t \rvert \}$.
That is, in the language of \autoref{sec:ModularOperatorOneParticle}, we take
$\mathcal{H}_\mathrm{r} := \mathrm{L}^2_{\Reals}( \Reals )$ and its subspace
\begin{align}
			\mathcal{L}_\mathrm{r}
	&:= \left\{ f \in \mathcal{H}_\mathrm{r} \bypred \supp f \subset [0,\infty) \right\}
	,
\end{align}
together with the Helmholtz operator, $A:= - \partial_{x}^2 + m^2$ for some $m>0$.
In this situation, as mentioned, an explicit form of $\log \Delta$ is known \cite{1975BisognanoWichmann,2020CiolliLongoRuzzi}; namely, $M_-$ multiplies with the function $2\pi x$, or in terms of integral kernels,
\begin{equation}\label{eq:exactWedge}
  M_-( x, y ) = 2 \pi x \updelta( x - y ).
\end{equation}
This allows for a consistency check of our numerical results.

\subsection{Discretisation of the Hilbert space}

For discretizing the Hilbert space $\mathcal{H}_\mathrm{r}$, we use a basis of normalized rectangular functions, also called box functions.
To that end, let $i$ run over an integer interval $\{ 0, \ldots, n - 1 \}$ for some discretization size $n \in \Naturals$,
and choose points $a_{0} < \ldots < a_{n}$, $b_{i} := a_{i + 1}$.
We then choose the orthogonal basis functions $e_{i}^{( n )}$ to be supported in $[ a_{i}, b_{i} ] \subset \Reals$, namely, 
\begin{align}
\label{eq:BoxBasis.Element}
			e_{i}^{( n )}( x )
	&= n_{i}
			\upTheta( x - a_{i} )
			\upTheta( b_{i} - x )
	.
\end{align}
where the normalization factor $n_{i} > 0$ is chosen so that ${\innerProd{e_{i}^{( n )}}{e_{i}^{( n )}}} = 1$. 
In the present case, we actually choose linearly spaced grid points $a_i = -b + \frac{2 i}{n} b$ with some fixed cut-off $b > 0$, see \autoref{fig:BoxBasis10} for an example; 
we will label this basis as $e_{i}^{( n, b )}$. 
In later sections, we will also consider nonlinearly spaced grids. 
\begin{figure}
	\centering
	\includegraphics{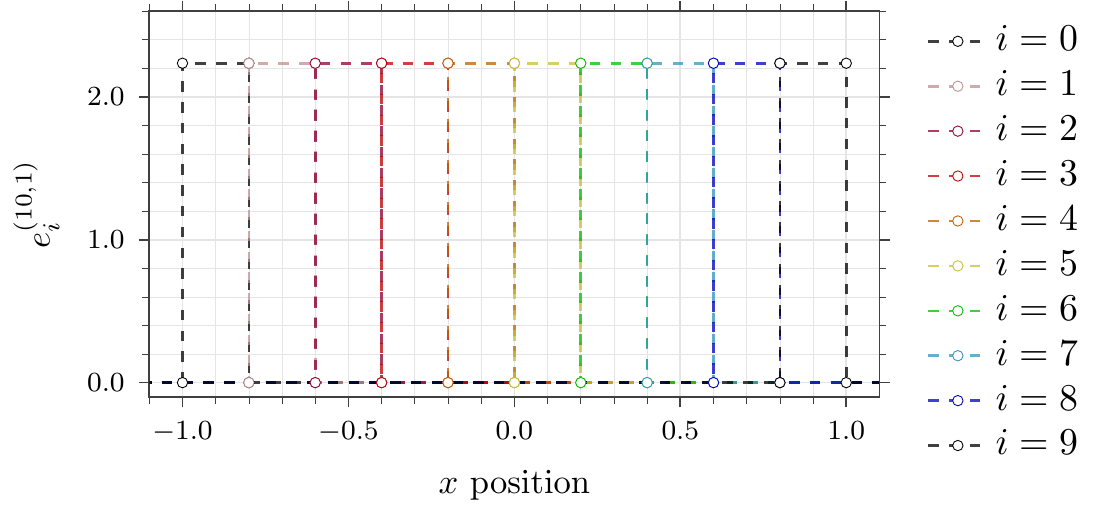}
	\caption{\label{fig:BoxBasis10} Basis of $n = 10$ box functions $e_{i}^{( n, b )}$ for a discretization over the interval $[ -b, b ]$ with $b = 1$. In this example, each element of the basis has the same width and hence the same normalization.}
\end{figure}

Choosing $n$ even, these functions $e_{i}$ are either contained in $\mathcal{L}_\mathrm{r}$ or $\mathcal{L}_\mathrm{r}^\perp$, hence the projector $\chi$ is simple to discretize in this basis; moreover, we have achieved $\dim \mathcal{L}_\mathrm{r}^{( n, b )} = \dim \mathcal{L}_\mathrm{r}^{\perp ( n, b )}$. 

It is perhaps less obvious whether this basis is chosen suitably to accommodate the discretization of $A$ and its positive and negative powers. In fact, one may notice that the functions $e_i^{( n, b )}$ are contained in the form domain, but not in the operator domain of the (undiscretized) operator $A^{\isupfrac{1}{4}}$. In order to investigate possible problems at this point, we alternatively used an orthonormal basis of continuous, piecewise linear functions; these are smoother and hence more adapted to the operators $A^{\pm \isupfrac{1}{4}}$, but the discretization of $\mathcal{L}_\mathrm{r}$ yields further complications in this case. Since this approach did not lead to qualitatively different numerical results compared with the box basis above, we do not report details here.
The discretization of $A^{\pm \isupfrac{1}{4}}$ in the box basis will be described in more detail in the next subsection.

\subsection{Discretisation of the Helmholtz operator}
\label{sec:ModHelmholtzD2}

In the present situation, we prefer to discretize $A^{-\isupfrac{1}{4}}$ rather than $A$ in the sense of \eqref{eq:DiscConcept}, since $A^{-\isupfrac{1}{4}}$ has an integral kernel of local class $\mathrm{L}^1$.
In fact, in terms of the modified Bessel function of the second kind $\BesselK{-\isubfrac{1}{4}}$ (and the Gamma function $\upGamma$) one has 
\begin{align}
\label{eq:ModHelmholtzD2.m1o4.Kernel}
	    A^{-\supfrac{1}{4}}( x, y )
	&= \left( \sqrt{\pi}\, \upGamma\bigl( \tfrac{1}{4} \bigr) \right)^{-1}
	    \left( \frac{2 m}{\lvert x - y \rvert} \right)^{\supfrac{1}{4}}
	    \BesselK{-\subfrac{1}{4}}\bigl( m \lvert x - y \rvert \bigr)
\end{align}
as the Fourier transform of $(p^2 + m^2)^{-\frac{1}{4}}$, see \cite[Table~1.3, Formula~(7)]{1954Bateman}.
This is a convolution kernel of the form $A^{-\isupfrac{1}{4}}( x, y ) = f( \lvert x - y \rvert )$, with $f$ integrable on $[ 0, \infty )$ and smooth on $( 0, \infty )$.
Its matrix elements in the box basis are hence computed as
\begin{align}
\label{eq:BoxBasis.ModHelmholtzD2.m1o4}
			\left( A^{-\supfrac{1}{4}} \right)^{( n, b )}_{i j}
	&:= \innerProd{e_{i}^{( n, b )}}{A^{-\supfrac{1}{4}} e_{j}^{( n, b )}}
  = n_{i} n_{j}
  		\iint_{[ a_{i}, b_{i} ] \times [ a_{j}, b_{j} ]}
        f\bigl( \lvert x - y \rvert \bigr)
      \id{x} \id{y}
  .
\end{align}
Using the substitution $( x', y' ) = ( y + x - a_{i} - a_{j}, y - x )$, and splitting the integration region into parts $x \geq y$ and $x \leq y$ where necessary, one finds for $i<j$,
\begin{subequations}
\begin{align}
\label{eq:BoxBasis.ModHelmholtzD2.m1o4.Diagonal}
			\left( A^{-\supfrac{1}{4}} \right)^{( n, b )}_{i i}
  &= 2 n_{i}^2 F( b_{i} - a_{i} )
  ,
\\
			\left( A^{-\supfrac{1}{4}} \right)^{( n, b )}_{i j}
	&= \left( A^{-\supfrac{1}{4}} \right)^{( n, b )}_{j i}
\\\label{eq:BoxBasis.ModHelmholtzD2.m1o4.NonDiagonal}
  &= n_{i} n_{j}
  		\bigl(
      	F( b_{j} - a_{i} )
      - F( b_{j} - b_{i} )
  		- F( a_{j} - a_{i} )
      + F( a_{j} - b_{i} )
			\bigr)
	,
\end{align}
\end{subequations}
with
\begin{align}
\label{eq:BoxBasis.ModHelmholtzD2.Antiderivative}
			F( x )
	&:= x \int_0^x f( y' ) \id{y'} - \int_0^x y' f( y' ) \id{y'}
	.
\end{align}
Using \eqref{eq:ModHelmholtzD2.m1o4.Kernel}, the expression for $F( x )$ can be written explicitly in terms of generalized hypergeometric functions.

As matrix representation of $A^{\isupfrac{1}{4}}$, we use the numerically computed matrix inverse of $A^{-\isupfrac{1}{4}}$; due to the peculiar nature of the spectrum of $B^{( n, b )}$ as mentioned in \autoref{sec:NumericalConcept}, it is important that these two matrices are, as close as possible within floating point precision, inverses of each other.

All further steps in the calculation are performed with the respective matrices and their eigenvalue decomposition, obtained by standard algorithms.

\subsection{Numerical results}
\label{sec:D2RightWedge.Results}

Now proceeding to the results of our numerical scheme, let us compare the matrix $\log \Delta^{( n, b )}$ or rather $M_-^{( n, b )}$ to the kernel of the ``exact solution'' $M_-$ in \eqref{eq:exactWedge}, in dependence of the discretization parameters $n$ and $b$. We first set the model parameter $m=1$, but will come back to more general values of $m$ later.

\begin{figure}
	\centering
	\includegraphics{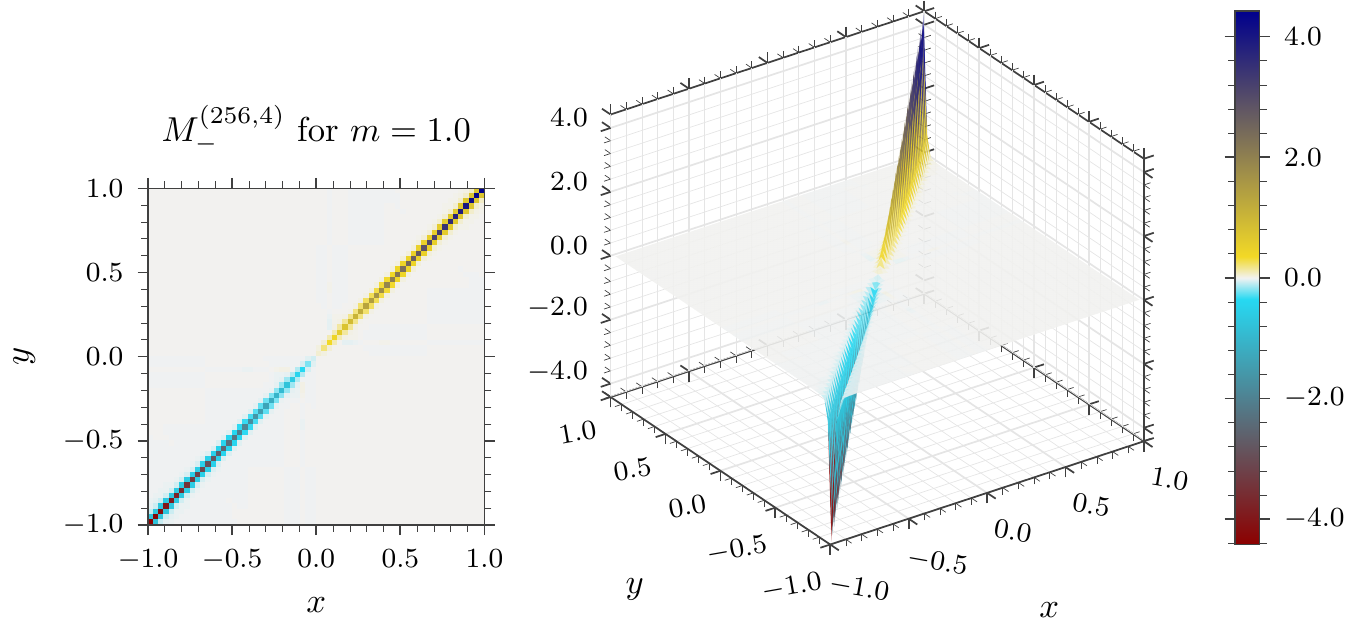}
	\caption{\label{fig:D2RightWedge.kernelMm} Example of the operator kernel $M_-^{( n, b )}( x, y )$ discretized by $n = 256$ box functions over an equally spaced grid for the range $[ -4, 4 ]$ and a mass parameter $m = 1.0$. The matrix is almost diagonal (see matrix plot on the left) with values falling off very rapidly away from the diagonal (see surface plot on the right). Both plots share the same colour scale from dark red for large negative values, through cyan for small negative values, very light grey at 0, yellow for small positive values, to dark blue for large positive values.}
\end{figure}
First, we show the result for $M_-^{( n, b )}$ with parameters $n = 256, b = 4$ in \autoref{fig:D2RightWedge.kernelMm}. As expected, the matrix entries are very small except near the diagonal $i=j$. However, while decaying fast with $\lvert i - j \rvert$, the entries certainly still have a noticeable magnitude, e.g., for $\lvert i - j \rvert = 1$, even if the ``exact solution'' is proportional to a Dirac delta. This is no less than expected in a numerical approximation, but emphasizes our point that convergence to the undiscretized operator needs to be read in the weak sense, see~\eqref{eq:weakConv}; for a quantitative comparison, we need to ``smear'' the result with suitable test functions.

To that end, we choose a set of Gaussian functions with a fixed width (standard deviation) $\sigma$ that extends over a small number of grid points and a varying position parameter $\mu_{i}$, 
\begin{align}
\label{eq:GaussianTestFunction}
			h_{i}( x )
	&= \frac{1}{\sqrt[4]{\pi \sigma^2}}
			\exp\left( -\frac{( x - \mu_{i} )^2}{2 \sigma^2} \right)
	.
\end{align}
These are normalized with respect to the inner product of $\mathcal{H}_\mathrm{r}$. 
For the smearing to be effective, the width $\sigma$ has to be chosen somewhat larger than the grid spacing (even for the lowest resolutions $n$ in the comparison).
On the other hand, substantially larger $\sigma$, while valid, would obscure detail in the comparisons below.

Focussing on values near the diagonal, we now evaluate both sides of \eqref{eq:weakConv} for $h=h'=h_i$. For the expectation values of the multiplication operator \eqref{eq:exactWedge}, we find
\begin{equation}
\label{eq:D2RightWedge.ModularOperator.UpperRightBlock.Reference.Smeared}
			\innerProd{h_{i}}{M_- h_{i}}
	= 2 \pi \mu_{i}
	.
\end{equation}
Also, let $h_{i k} = {\innerProd{h_i}{e_k^{( n, b )}}}$; these can be computed in terms of Gaussian error functions or by numerical integration.
Then our numerical approximation to \eqref{eq:D2RightWedge.ModularOperator.UpperRightBlock.Reference.Smeared} is
\begin{equation}
\label{eq:ModularOperator.UpperRightBlock.Smeared.Matrix}
    \innerProd{h_{i}}{M_- h_{i}}
	\approx \sum_{k, l =0}^{n-1}
				h_{i k}
				\left( M_-^{( n, b )} \right)_{k l}
				h_{i l}
	.
\end{equation}

For the plots, we let $0 \leq i \leq 40$, $\sigma = \frac{6}{32}$, and $\mu_{i}$ range over $[ -2, 2 ]$ equally spaced.
\begin{figure}
	\centering
	\includegraphics{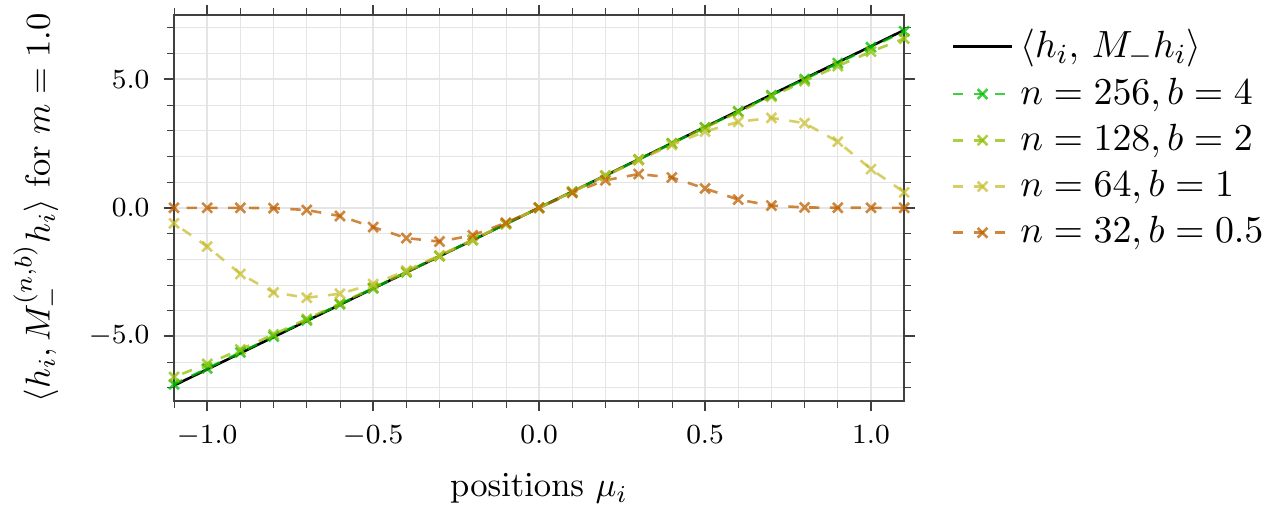}
	\caption{\label{fig:D2RightWedge.Mm.disc} Main diagonal of $M_-^{( n, b )}$ for the right wedge with a mass parameter $m = 1.0$, smeared against test functions \eqref{eq:GaussianTestFunction} at different positions $\mu_{i}$. Note that in some cases, not the full discretization interval $[ -b, b ]$ is shown.}
\end{figure}
\autoref{fig:D2RightWedge.Mm.disc} shows the results for different discretization sizes $n$ and different discretization intervals $[ -b, b ]$ such that the width of the box functions remains constant, $\frac{2 b}{n} = \frac{1}{32}$.
The numeric results (dashed lines with crosses, varying colour) are calculated with 450 decimal places of precision, and they approximate the expected linear expression increasingly better on the fixed interval $[-1,1]$, in the sense that errors caused by ``boundary effects'' are less noticeable in this interval as $b$ is increased.
To demonstrate this further, consider the relative error
\begin{align}
\label{eq:D2RightWedge.ModularOperator.UpperRightBlock.Smeared.MatrixError}
			\mathrm{err}_{i}\left( M_-^{( n, b )} \right)
	&:= \left\lvert 
				1
			- \frac{%
					\sum_{k, l}
						h_{i k}
						( M_-^{( n, b )} )_{k l}
						h_{i l}
				}{%
					\innerProd{h_{i}}{M_- h_{i}}
				}
			\right\rvert
	.
\end{align}
These values are shown, for different discretization sizes $n$ and a fixed boundary parameter $b = 4$, in \autoref{fig:D2RightWedge.Mm.discn.error}.
\begin{figure}
	\centering
	\includegraphics{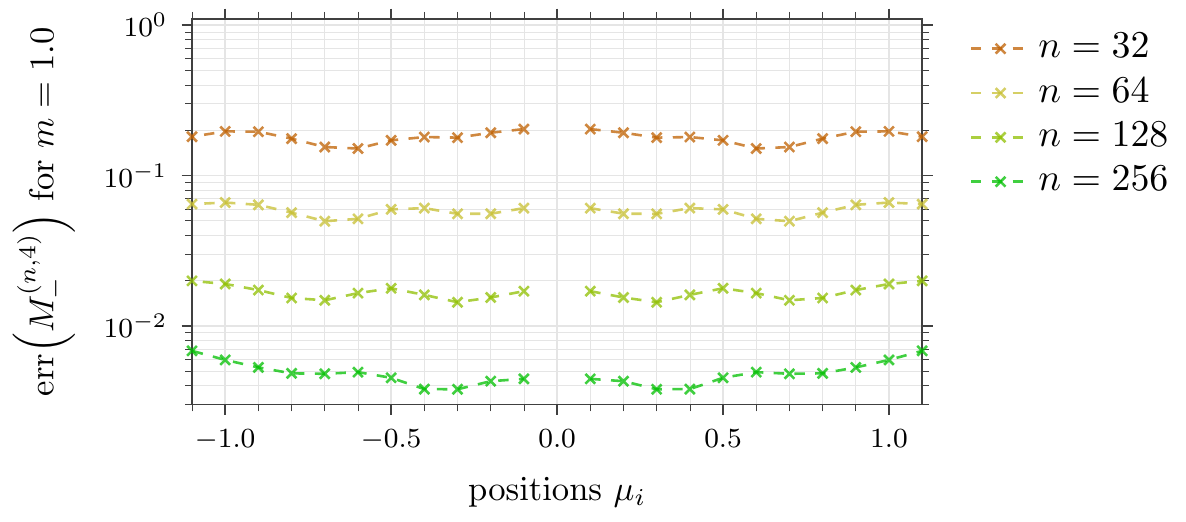}
	\caption{\label{fig:D2RightWedge.Mm.discn.error} Behaviour of the relative error \eqref{eq:D2RightWedge.ModularOperator.UpperRightBlock.Smeared.MatrixError} (on a logarithmic scale) for increasing discretization sizes $n$ while keeping the discretization range $[ -4, 4 ]$ fixed.}
\end{figure}
The relative error is approximately reduced by a constant factor each time the resolution is doubled. 

It is evident from the graphs that the cut-off at $\pm b$ does introduce a noticeable error in the result -- in fact, the contribution of matrix elements ``near the boundary'' in $B^{( n, b )}$ and $M_-^{( n, b )}$ is by no means small -- but that nevertheless, we obtain a good approximation of the integral kernel of $M_-$ in the region sufficiently far \emph{away} from the boundary.  

Since the numerical results for $n = 256$ over the discretization range $[ -4, 4 ]$ are sufficiently close, let us now take a look at mass independence.
We vary the mass parameter $m$ to obtain the results shown in \autoref{fig:D2RightWedge.Mm.mass}.
\begin{figure}
	\centering
	\includegraphics{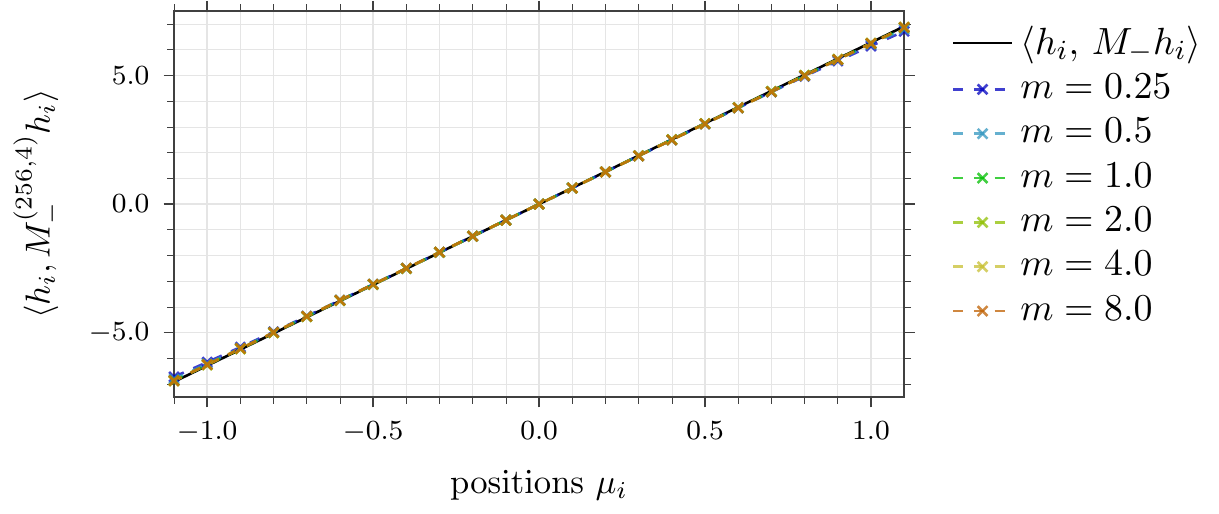}
	\caption{\label{fig:D2RightWedge.Mm.mass} Main diagonal of $M_-^{( n, b )}$ for the right wedge, discretized over the range $[ -4, 4 ]$ with $n = 256$ box functions, and then smeared against Gaussian functions \eqref{eq:GaussianTestFunction}. The results are shown for a varying mass parameter, but the numeric data points (dashed lines with crosses) cover the analytic result (solid, black line).}
\end{figure}
As expected from the exact result, there is no noticeable mass dependence, up to numerical errors due to the finite discretization and effects closer to the boundary of the discretization interval.

\section{The double cone in 1 + 1 dimensions}
\label{sec:D2DoubleCone}

Now that we have tested our numerical approach against the analytic solution for the right wedge in two-dimensional Minkowski spacetime, let us consider examples for which explicit expressions for the modular operator are yet unknown. 

Our first example is, again in the $(1 + 1)$-dimensional scalar field, the subspace for the double cone, namely the causal closure of the interval $[ -1, 1 ]$ in the time-0 plane. 
That is, we set as in the wedge case
\begin{align}
			\mathcal{H}_\mathrm{r}
	&= \mathrm{L}^2_\Reals( \Reals )
	,
&
			A
	&= -\partial_x^2 + m^2
	,
\end{align}
but now
\begin{align}
			\mathcal{L}_\mathrm{r}
	&= \left\{ f \in \mathcal{H}_\mathrm{r} \bypred \supp f \subset [ -1, 1 ] \right\}
	.
\end{align}
(Since the subspaces for double cones with other centres or radii can be unitarily mapped to the above situation with an appropriate change in $m$, there is no loss in generality in choosing the interval $[-1,1]$.)

The discretization of the Hilbert space $\mathcal{H}_\mathrm{r}$ and the operator $A$, as well as the following computation, are handled essentially as in \autoref{sec:D2RightWedge}. However, in choosing the box basis $\{e_j^{( n, b )}\}$, we make the following change: in order to keep $\dim \mathcal{L}_\mathrm{r} = \dim\mathcal{L}_\mathrm{r}^\perp$, we choose the grid points $a_i$ equally spaced only in the interval $[ -1, 1 ]$, and consider three different grid spacings outside the interval.
For $b = 2$, the outside is also equally spaced, while for $b = 4$ and $b = 6$, we choose a spacing that increases linearly towards the cut-offs at $\pm b$ starting from the fixed value of the inner spacing such that a quarter of the basis functions is supported to the left, half of them supported inside, and another quarter supported to the right of the interval. 
The discretization size is set to $n = 256$; the results for $n = 128$ are coarser but very similar, hence we do not include them here.

\begin{figure}
	\centering
	\includegraphics{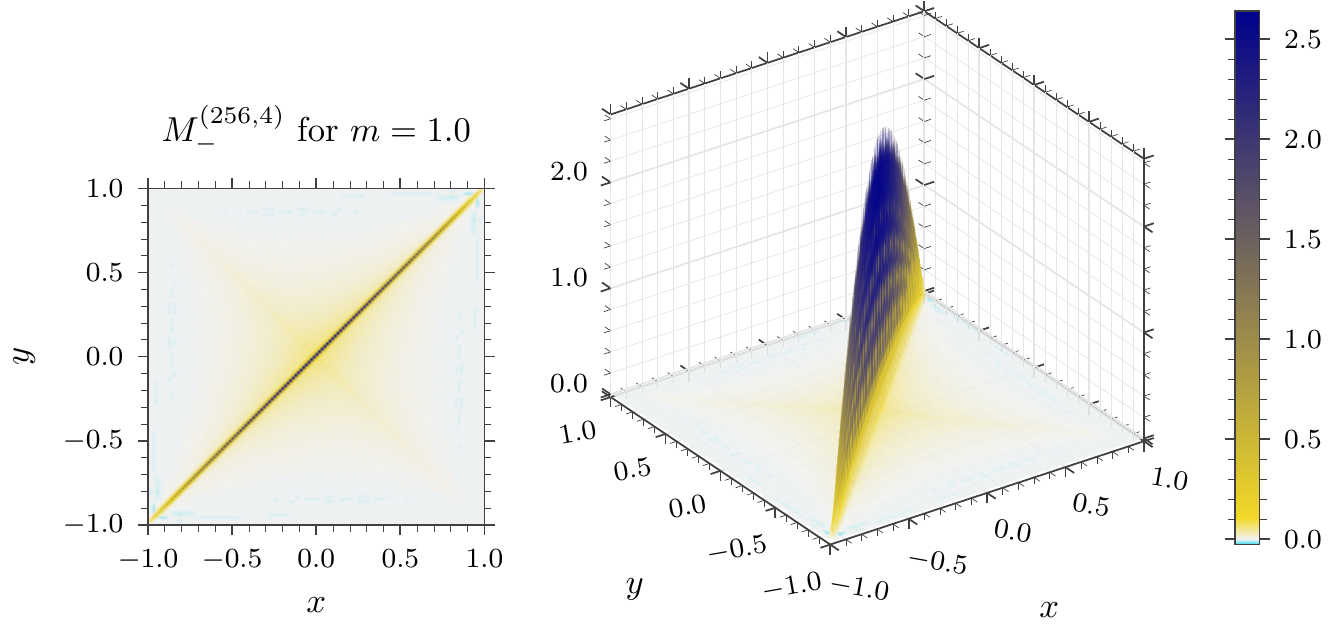}
	\caption{\label{fig:D2DoubleCone.kernelMm} Example of the discretization of the operator kernel $M_-^{( n, b )}( x, y )$ (at $m = 1.0$) in $1 + 1$ dimensions. The discretization uses $n = 256$ box functions over the range $[ -4, 4 ]$ with half of the functions supported on $[ -1, 1 ]$ (equally spaced grid), and increasingly larger grid steps away from the interval $[ -1, 1 ]$ (not visible in the plots). Similar to \autoref{fig:D2RightWedge.kernelMm}, the kernel is concentrated near the diagonal (see matrix plot on the left) and falls off strongly away from the diagonal (see surface plot on the right). Both plots share the same colour scale from very light grey at 0, through yellow for small positive values, to dark blue for large positive values.}
\end{figure}
An example result for the matrix $M_-^{( n, b )}$ with a basis of 256 box functions over the interval $[ -4, 4 ]$, and with $m = 1$, is shown in \autoref{fig:D2DoubleCone.kernelMm}; we focus here on the interior of the interval. The matrix is supported along the diagonal (first plot) and the values away from the diagonal fall off rapidly (second plot). By these results, $M_-$ still appears to be at least close to a multiplication operator.

\begin{figure}
	\centering
	\includegraphics{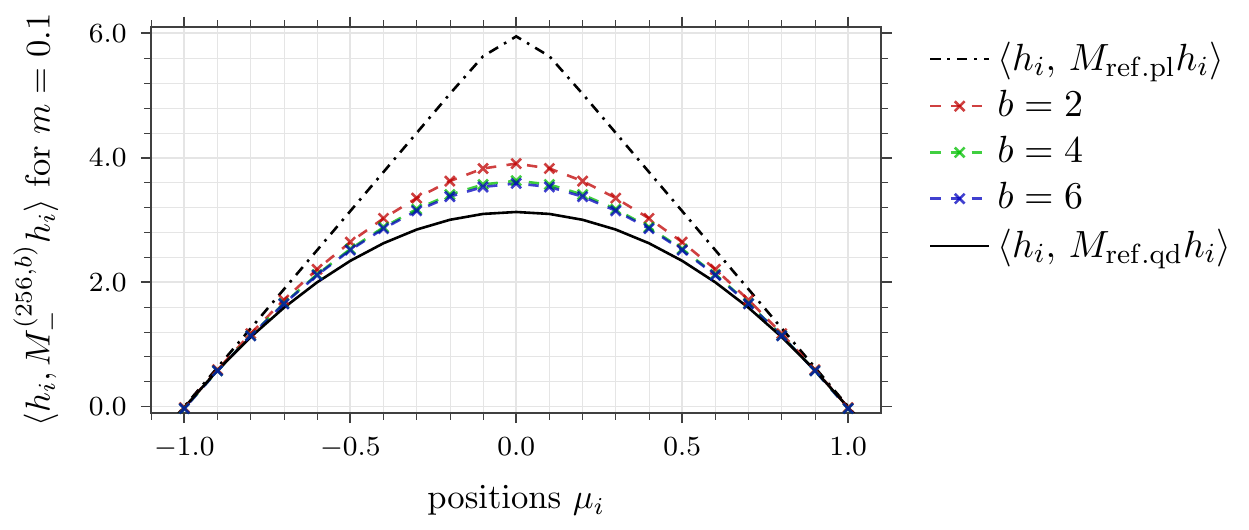}
	\caption{\label{fig:D2DoubleCone.Mm.disc} Comparison of $M_-^{( n, b )}$ for the $(1 + 1)$-dimensional double cone, smeared against Gaussian functions $h_{i}$ as in \autoref{fig:D2RightWedge.Mm.disc}, for different discretization ranges $[ -b, b ]$ but with a fixed resolution $n = 256$.}
\end{figure}
A comparison of $M_-^{( n, b )}$ for a fixed mass $m = 0.1$ but a varying discretization parameter $b \in \{ 2, 4, 6 \}$ is shown in \autoref{fig:D2DoubleCone.Mm.disc}.
Even though all discretizations have the same grid and hence the same number of basis functions supported within the interval, the difference choices of an outer grid affect the results in the interior, with the largest difference between $b = 2$ and $b = 4$.
Since the cut-off effects become very small when increasing the discretization range further from $b = 4$ to $b = 6$, in the following we report the results for $b = 4$ only.
Notice that the non-equal grid spacing yields twice the resolution over the interval region when compared to the case of the right wedge.
Hence we decrease the width of the Gaussian test functions to $\sigma = \frac{6}{64}$, but otherwise use the same set of functions as in \autoref{sec:D2RightWedge}.

Let us now quantitatively investigate the behaviour of the values near the diagonal when changing the mass parameter.
We consider two reference values: The first is the quadratic result (short: qd) for $m = 0$ that is expected for higher dimensions \cite{1982HislopLongo},
\begin{align}
\label{eq:D2DoubleCone.ModularOperator.UpperRightBlock.ReferenceQD}
			M_{\mathrm{ref.qd}}( x, y )
	&:= \pi ( 1 - x^2 ) \updelta( x - y )
	.
\end{align}
For the second, note that our double cone is the intersection of a left wedge with tip at $x = 1$ and a right wedge with tip at $x = -1$. Denoting by $\Delta_{\mathrm{L}},\Delta_\mathrm{R}$ their associated modular operators, we have $-\log \Delta \leq -\log \Delta_{\mathrm{L},\mathrm{R}}$ (cf.~\cite[Sec.~2.1]{2000Borchers}, with operator monotonicity of the logarithm). Thus $\innerProd{h}{M_- h} \leq \min_{\mathrm{D}\in\{\mathrm{L},\mathrm{R}\}}\innerProd{h}{M_{-,\mathrm{D}} h}$ from Prop.~\ref{prop:ModularFormula}, where $M_{-,\mathrm{D}}$ are known explicitly, analogous to \eqref{eq:exactWedge}. Assuming that $M_-$ for the double cone is still a multiplication operator, it must therefore be bounded above by the piecewise linear (short: pl) kernel
\begin{align}
\label{eq:D2DoubleCone.ModularOperator.UpperRightBlock.ReferencePL}
			M_{\mathrm{ref.pl}}( x, y )
	&:= 2 \pi
			\min\bigl(
			- ( x - 1 ),
			x + 1
			\bigr)
			\updelta( x - y )
	.
\end{align}
In expectation values for the Gaussian test functions \eqref{eq:GaussianTestFunction}, we obtain
\begin{subequations}
\label{eq:D2DoubleCone.ModularOperator.UpperRightBlock.References.Smeared}
\begin{align}
\label{eq:D2DoubleCone.ModularOperator.UpperRightBlock.ReferenceQD.Smeared}
			\innerProd{h_{i}}{M_{\mathrm{ref.qd}} h_{i}}
	&= \pi \left( 1 - \frac{\sigma^2}{2} - \mu_{i}^2 \right)
	,
\\
\label{eq:D2DoubleCone.ModularOperator.UpperRightBlock.ReferencePL.Smeared}
			\innerProd{h_{i}}{M_{\mathrm{ref.pl}} h_{i}}
	&= 2 \pi
			\biggl( 1 - \mu_{i} \erf\left( \frac{\mu_{i}}{ \sigma} \right) \biggr)
		- 2 \sigma \sqrt{\pi}
			\exp\left( -\frac{\mu_{i}^2}{\sigma^2} \right)
	.
\end{align}
\end{subequations}

\begin{figure}
	\centering
	\includegraphics{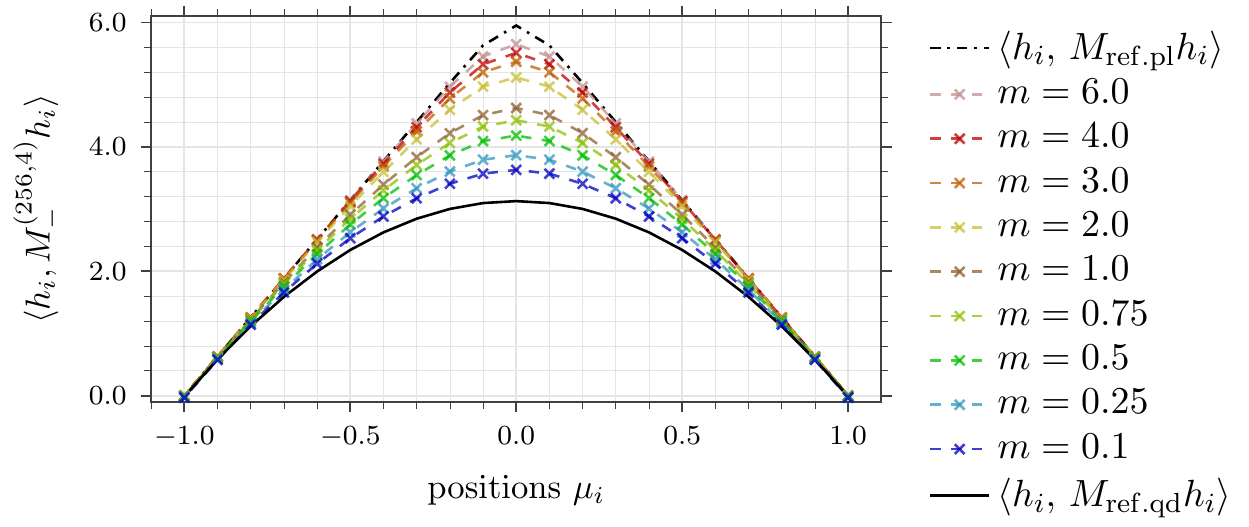}
	\caption{\label{fig:D2DoubleCone.Mm.mass} Comparison of the smeared operator kernel $M_-^{( n, b )}$ for the $(1 + 1)$-dimensional interval with different mass parameters $m$ and a fixed discretization over the range $[ -4, 4 ]$ with resolution $n = 256$.}
\end{figure}
\autoref{fig:D2DoubleCone.Mm.mass} shows the diagonal of the smeared operator kernel for various values of $m$. It is clearly visible that the numerical results depend on $m$, and that they differ from the quadratic reference (the exact result suggested for all masses in \cite{2020LongoMorsella,2022Longo}).
All curves fall between the two reference lines, and they seem to approximate the quadratic reference (``massless case'') for small $m$ and the piecewise linear reference (``double wedge'') for large $m$.

Another perspective on the mass dependence is shown in \autoref{fig:D2DoubleCone.Mm.pos}, where the diagonal of $\innerProd{h_i}{M_-^{( n, b )} h_i}$ is plotted against the mass parameter for various positions $\mu_{i} = x$ within the interval region.
Note, however, that the data points for inverse masses of the order of the discretization resolution, and large inverse masses of the order of the discretization range should be considered less robust, because a detailed analysis of their behaviour would require larger discretization parameters $n$ and $b$, respectively.
\begin{figure}
	\centering
	\includegraphics{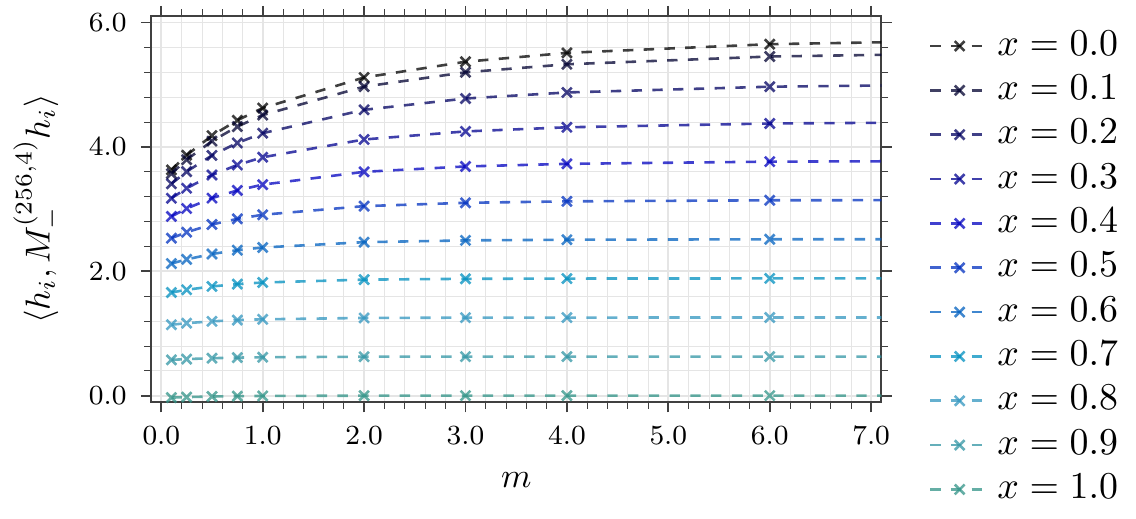}
	\caption{\label{fig:D2DoubleCone.Mm.pos} Comparison of the mass dependence of the smeared operator kernel $M_-^{( n, b )}$ at different positions $\mu_{i} = x$ within the right half of the interval in $1 + 1$ dimensions. The discretization is kept fixed over the range $[ -4, 4 ]$ with $n = 256$ box functions.}
\end{figure}

\section{The double cone in 3 + 1 dimensions}
\label{sec:D4DoubleCone}

We now investigate the analogous case of a double cone in physical spacetime dimension; that is, we set
\begin{align}
			\mathcal{H}_\mathrm{r}
	&= \mathrm{L}^2_\Reals (\Reals^3), \quad
&
			\mathcal{L}_\mathrm{r}
	&= \left\{ f \in \mathcal{H}_\mathrm{r} \bypred \supp f \subset \mathcal{B}_1 \right\},
&
			A
	&= -\Laplace + m^2
	 ,
\end{align}
where $\mathcal{B}_1$ is the ball of radius 1 around the origin.
However, since discretization with three-dimensional box functions is not numerically feasible with our current methods (the required matrix dimensions would be far too large), we first use rotational symmetry to simplify the problem.

To that end, let us express the Helmholtz operator $A$ in spherical coordinates,
\begin{subequations}
\begin{align}
\label{eq:ModHelmholtzD4}
			A
	&= - \frac{1}{r^2} \pderiv{}{r} r^2 \pderiv{}{r}
		+ \frac{L^2}{r^2}
		+ m^2
	,
\\\label{eq:AngularMomentumOperator}
			L^2
	&:= - \frac{1}{\sin \vartheta} \pderiv{}{\vartheta} \sin \vartheta \pderiv{}{\vartheta}
		- \frac{1}{\sin^2 \vartheta} \pderiv[2]{}{\varphi}
	.
\end{align}
\end{subequations}
Decomposing the square of the angular momentum operator $L^2$ into its known eigenbasis, i.e., spherical harmonics $\mathrm{Y}^{k}_{\ell}$ ($\ell \in \Naturals_0$, $k \in \Integers$, $-\ell \leq k \leq \ell$) with eigenvalues $\ell(\ell+1)$, we can identify by a unitary transformation,
\begin{align}
			\mathcal{H}_\mathrm{r}
	&= \bigoplus_{\ell, k} \mathrm{L}^2_\Reals\bigl( ( 0, \infty ), r^2 \d{r} \bigr)
	,
&
			\mathcal{L}_\mathrm{r}
	&= \bigoplus_{\ell, k} \left\{ f \bypred \supp f \subset ( 0, 1 ] \right\}
	,
&
			A
	&= \bigoplus_{\ell, k} A_{\ell}
	,
\end{align}
where $A_\ell$ is the modified spherical Bessel operator  
\begin{align}
\label{eq:ModHelmholtzD4.Radial}
			A_{\ell}
	&= - \frac{1}{r^2}
			\left(
				r^2 \pderiv[2]{}{r}
			+ 2 r \pderiv{}{r}
			- m^2 r^2
			- \ell (\ell + 1)
			\right)
	.
\end{align}
In this ``direct sum of one-particle structures'', it is clear that the conditions of \autoref{def:ops} hold for the sum if and only if they hold for every summand, and the operators $B$, $M_{\pm}$ etc.\ split accordingly.
For the following, we will focus on the summands for $\ell = 0$ and $\ell = 1$.

We proceed similarly as before to choose a discretization basis along the radial dimension, but now for the discretization range $[ 0, b ]$ with $b \in \{ 2, 4, 6 \}$.
Since the measure for the inner products of the Hilbert spaces takes the form $r^2 \d{r}$, the normalization factor of the box functions changes accordingly.
The discretization grid is defined in the same way as for the interval case in two dimensions -- equally spaced within the ball, $r \in [ 0, 1 ]$.
Again we present our results for $n = 256$ only, with $n = 128$ yielding coarser but very similar values. 
For $b = 2$, the outside spacing equals the inside, while for $b = 4$ and $b = 6$, the spacing increases linearly from $r = 1$ to $r = b$ as in \autoref{sec:D2DoubleCone}.
Thus $\dim \mathcal{L}_\mathrm{r} = \dim\mathcal{L}_\mathrm{r}^\perp$ holds summand by summand.

Instead of a discretization of $A_{\ell}^{-\isupfrac{1}{4}}$, we first consider the inverse $A_{\ell}^{-1}$, since the latter has a known expression for its kernel.
Namely, it is the Green's function that solves
\begin{align}
\label{eq:ModHelmholtzD4.Radial.GreensFunction.Equation}
			\big(A_{\ell} A_{\ell}^{-1}\big)( r, s )
	&= \frac{1}{r^2} \updelta( r - s )
	.
\end{align}
Expressed in terms of modified Bessel functions of the first and second kind, $\BesselI{\ell + \isubfrac{1}{2}}$ and $\BesselK{\ell + \isubfrac{1}{2}}$, respectively, we have
\begin{align}
\label{eq:ModHelmholtzD4.Radial.GreensFunction}
			A_{\ell}^{-1}( r, s )
	&= \sqrt{\frac{1}{r s}}
			\left(
				\upTheta( r - s )
				\BesselK{\ell + \subfrac{1}{2}}( m r )
				\BesselI{\ell + \subfrac{1}{2}}( m s )
			+ \upTheta( s - r )
				\BesselI{\ell + \subfrac{1}{2}}( m r )
				\BesselK{\ell + \subfrac{1}{2}}( m s )
			\right)
	.
\end{align}
The integrals for the matrix components $( A_{\ell}^{-1} )^{( n, b )}_{i j}$ are computed using standard anti-derivatives of the modified Bessel functions.
Notice that the operator kernel is a sum of products such that a coordinate transformation, as in \autoref{sec:D2RightWedge}, is not necessary.
Afterwards, the required discretization of $A_{\ell}^{-\isupfrac{1}{4}}$ is obtained by computing the fractional power of the matrix $( A_{\ell}^{-1} )^{( n, b )}_{i j}$ numerically.
Finally, $A_{\ell}^{\isupfrac{1}{4}}$ is computed as the matrix inverse of $( A_{\ell}^{-\isupfrac{1}{4}} )^{( n, b )}_{i j}$ to make sure that these two matrices are inverses of each other with a sufficient numerical precision.

Applying our algorithm, we obtain discretized approximations for $M_-^{( n, b )}$ as shown in \autoref{fig:D4DoubleCone.kernelMm} for $m = 1.0$ and $\ell \in \{ 0, 1\}$.
\begin{figure}
	\centering
	\begin{subfigure}{0.9\textwidth}
		\centering
		\includegraphics{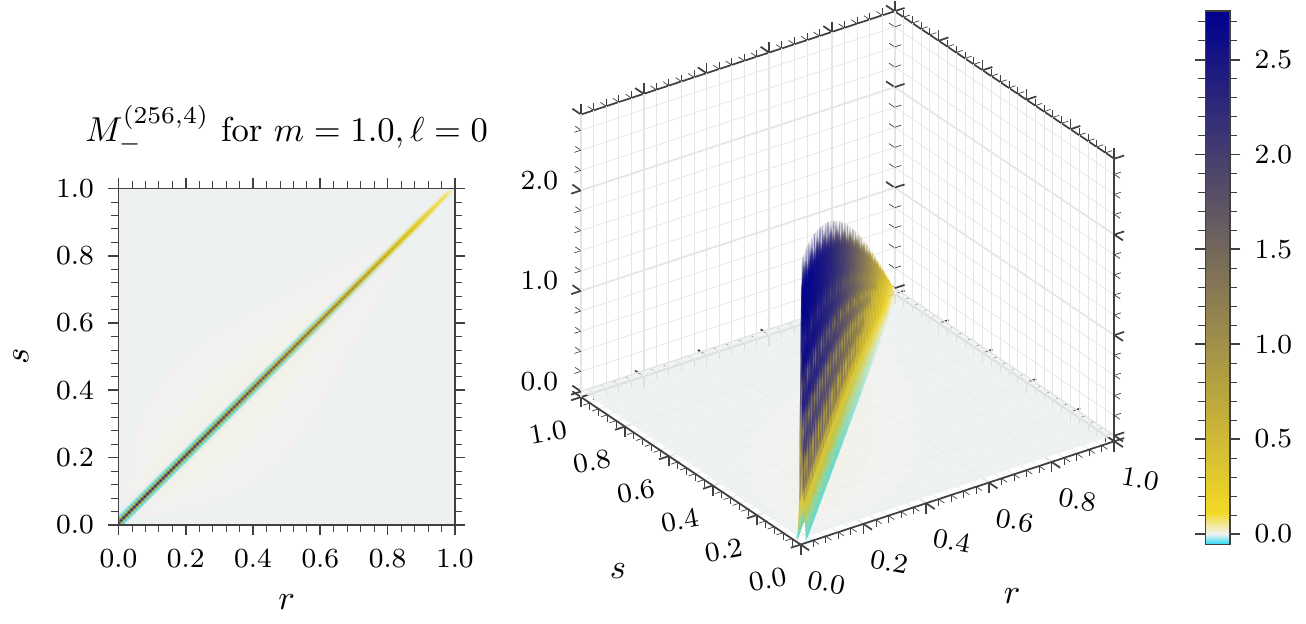}
		\caption{\label{fig:D4DoubleCone.kernelMm.ell0} $M_-^{( n, b )}$ for a discretization with $n = 256$ and $b = 4$ and for parameters $m = 1.0$ and $\ell = 0$.}
	\end{subfigure}
	\\[2em]
	\begin{subfigure}{0.9\textwidth}
		\centering
		\includegraphics{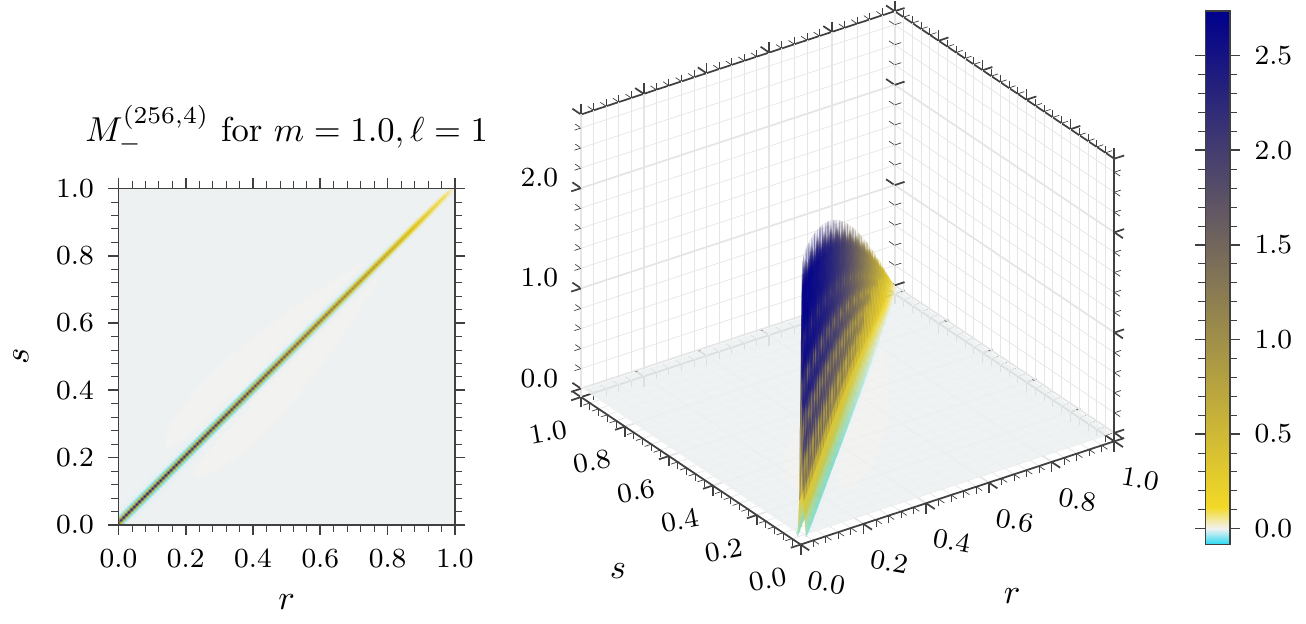}
		\caption{\label{fig:D4DoubleCone.kernelMm.ell1} $M_-^{( n, b )}$ for a discretization with $n = 256$ and $b = 4$ and for parameters $m = 1.0$ and $\ell = 1$.}
	\end{subfigure}
	\caption{\label{fig:D4DoubleCone.kernelMm} Example of the discretization of the radial operator kernel $M_-( r, s )$ for $\ell = 0$ (\ref{fig:D4DoubleCone.kernelMm.ell0}) and $\ell = 1$ (\ref{fig:D4DoubleCone.kernelMm.ell1}). The discretization uses 256 box functions over the range $[ 0, 4 ]$ with half of the functions supported on $r \in [ 0, 1 ]$ (equally spaced grid), and linearly increasing grid steps away from the interval (not shown in the plots). Similar to \autoref{fig:D2RightWedge.kernelMm} and \autoref{fig:D2DoubleCone.kernelMm}, the kernel is mostly diagonal (see matrix plot on the left) and falls off rapidly for $r \neq s$ (see surface plot on the right). Each pair of plots shares the same colour scale from cyan for small negative values, through very light grey at 0 and yellow for small positive values, to dark blue for positive values. Notice that the scales are almost (but not quite) identical for $\ell = 0$ and $\ell = 1$.}
\end{figure}
For both $\ell = 0$ and $\ell = 1$, these results look very similar to the two-dimensional interval, with a rapid falloff away from the diagonal $r = s$.

For a quantitative comparison along the diagonal, we swap the Gaussian test functions \eqref{eq:GaussianTestFunction} for log-Gaussian test functions (with parameters $\sigma$ and $\mu_{i}$ in natural scale),
\begin{align}
\label{eq:LogGaussianTestFunction}
			h_{i}( r )
	&= \frac{1}{\sqrt[4]{\pi \log \alpha_{i}}}
			\sqrt{\frac{1}{r^3}}
			\exp\left( -\frac{ \log^2\left( \alpha_{i} \frac{r}{\mu_{i}} \right)}{4 \log \alpha_{i}} \right)
	,
&
			\text{where}\quad
			\alpha_{i}
	&:= \sqrt{1 + \frac{\sigma^2}{\mu_{i}^2}}
	.
\end{align}
The position parameter $\mu_{i}$ takes values in $\{ 0.05, 0.1, 0.15, 0.2, \ldots, 1.15, 1.2 \}$, and we set $\sigma = \frac{6}{128}$.

\begin{figure}
	\centering
	\includegraphics{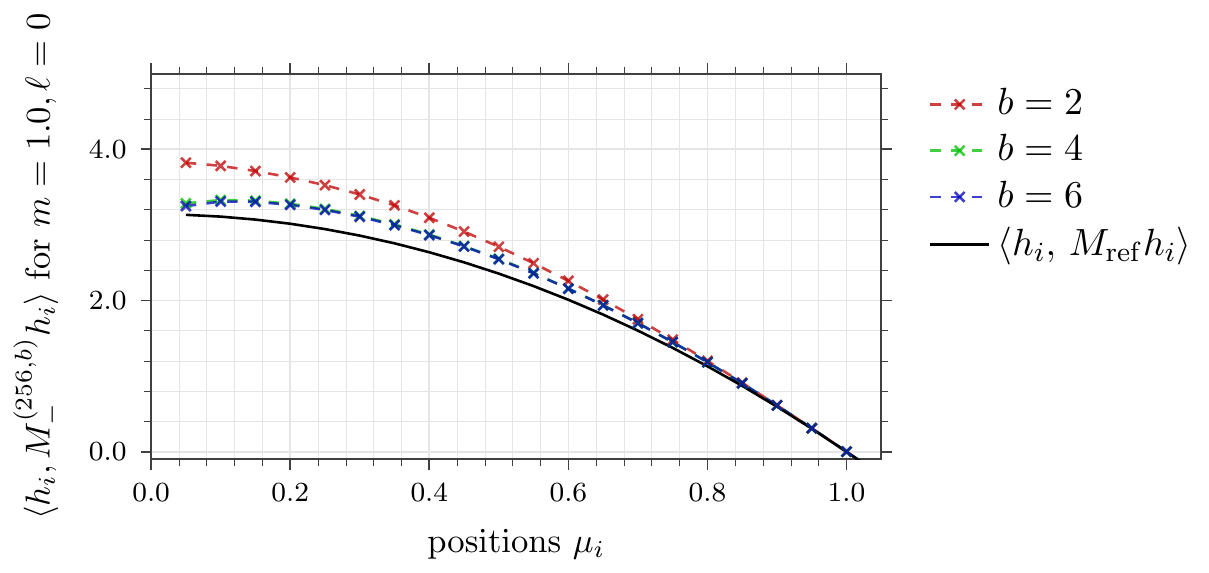}
	\caption{\label{fig:D4DoubleCone.Mm.disc} Comparison of different discretization ranges $b$ for the smeared expression $M_-^{( n, b )}$ (with $n = 256$, $m = 1.0$, and $\ell = 0$) in the $(3 + 1)$-dimensional case.}
\end{figure}
When varying the discretization parameter $b \in \{ 2, 4, 6 \}$ (while having the resolution parameter fixed to $n = 256$), see \autoref{fig:D4DoubleCone.Mm.disc}, we see that the results are nearly identical for cut-offs at 4 and 6, while there is a noticeable change when going from $b = 2$ to 4.
Also notice that we now have a discretization boundary at $r = 0$ that might cause slightly imprecise results for very small radii.
To take a closer look at the mass dependence, we choose $b = 4$.

As analytic reference, we take again the quadratic expression from the massless solution, where here it takes the form
\begin{align}
\label{eq:D4DoubleCone.ModularOperator.UpperRightBlock.Reference}
			M_{\mathrm{ref}}( r, s )
	&:= \pi \left( 1 - r^2 \right) \frac{1}{r^2} \updelta( r - s )
	.
\end{align}
This reference as well as the numerical results are smeared against the log-Gaussian test functions \eqref{eq:LogGaussianTestFunction}.

Though the mass dependence is not as pronounced as in the two-dimensional case, a variation with the parameter $m$ is clearly present and shown in \autoref{fig:D4DoubleCone.Mm.mass} for two different values $\ell = 0$ (\autoref{fig:D4DoubleCone.Mm.mass.ell0}) and $\ell = 1$ (\autoref{fig:D4DoubleCone.Mm.mass.ell1}).
\begin{figure}
	\centering
	\begin{subfigure}{0.9\textwidth}
		\centering
		\includegraphics{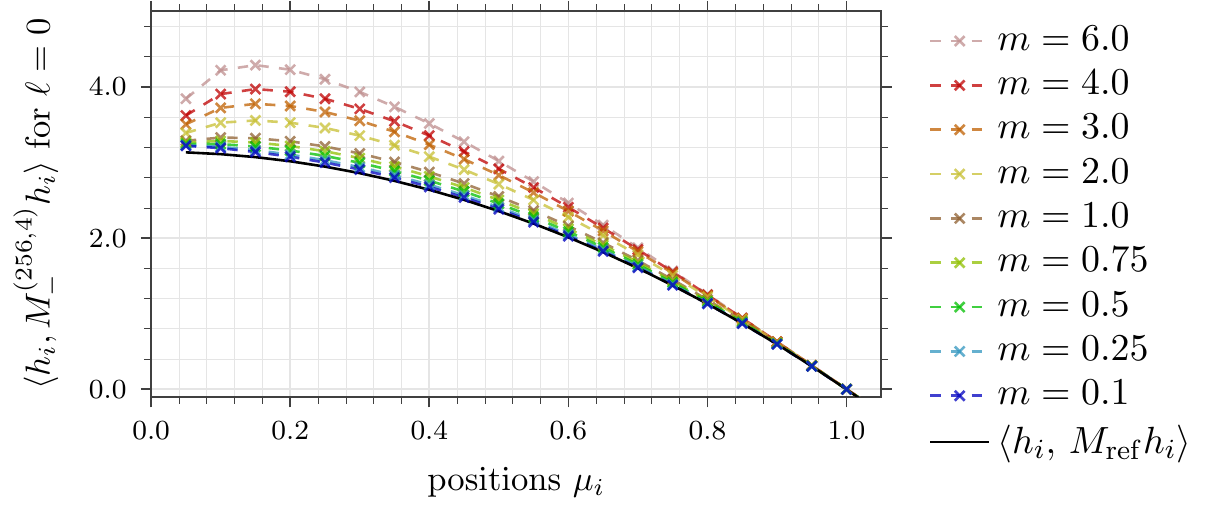}
		\caption{\label{fig:D4DoubleCone.Mm.mass.ell0} $M_-^{( n, b )}$ for a discretization with $n = 256$ and $b = 4$ and for parameter $\ell = 0$.}
	\end{subfigure}
	\\[2em]
	\begin{subfigure}{0.9\textwidth}
		\centering
		\includegraphics{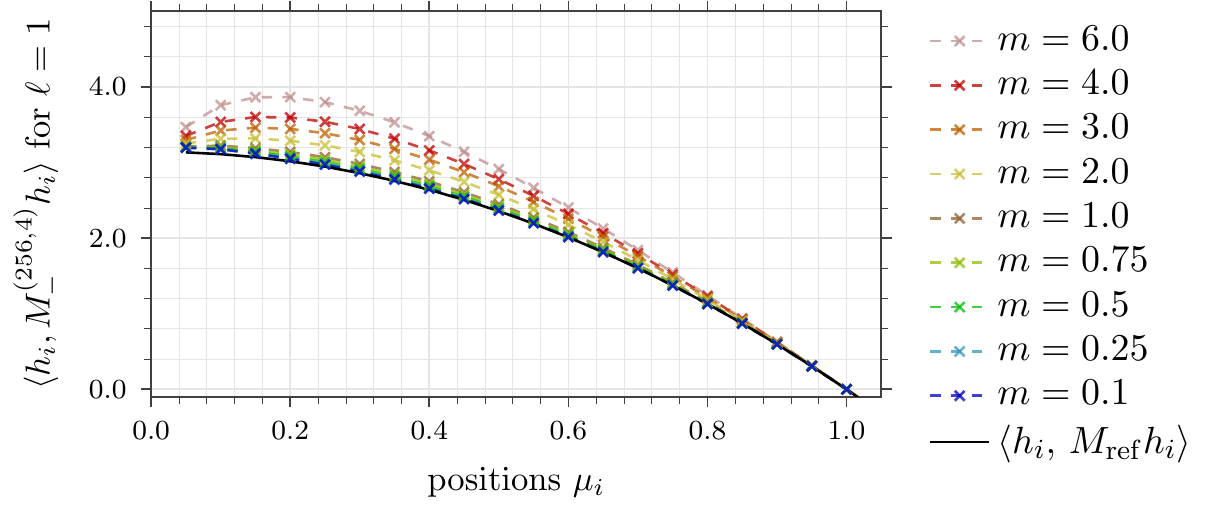}
		\caption{\label{fig:D4DoubleCone.Mm.mass.ell1} $M_-^{( n, b )}$ for a discretization with $n = 256$ and $b = 4$ and for parameter $\ell = 1$.}
	\end{subfigure}
	\caption{\label{fig:D4DoubleCone.Mm.mass} Comparison of the smeared operator kernel $M_-^{( n, b )}$ for different mass parameters $m$ and a fixed discretization of the range $[ 0, 4 ]$ with $n = 256$ box functions, half of them inside the interval.}
\end{figure}
Once again, the curves seem to approach the massless reference when $m$ becomes small, but diverge from it when the mass is increased.
In \autoref{fig:D4DoubleCone.Mm.pos}, we present the same dependence from the perspective of various masses at fixed locations $\mu_{i} = r$ within the interval.
Especially for small radii, a mass dependence emerges, while the nearly horizontal lines for $r > 0.7$ shows that the modular operator becomes mass-independent when approaching the interval boundary $r = 1$.
At that boundary, the behaviour is very similar to the case of a left wedge placed at $r = 1$.
\begin{figure}
	\centering
	\begin{subfigure}{0.9\textwidth}
		\centering
		\includegraphics{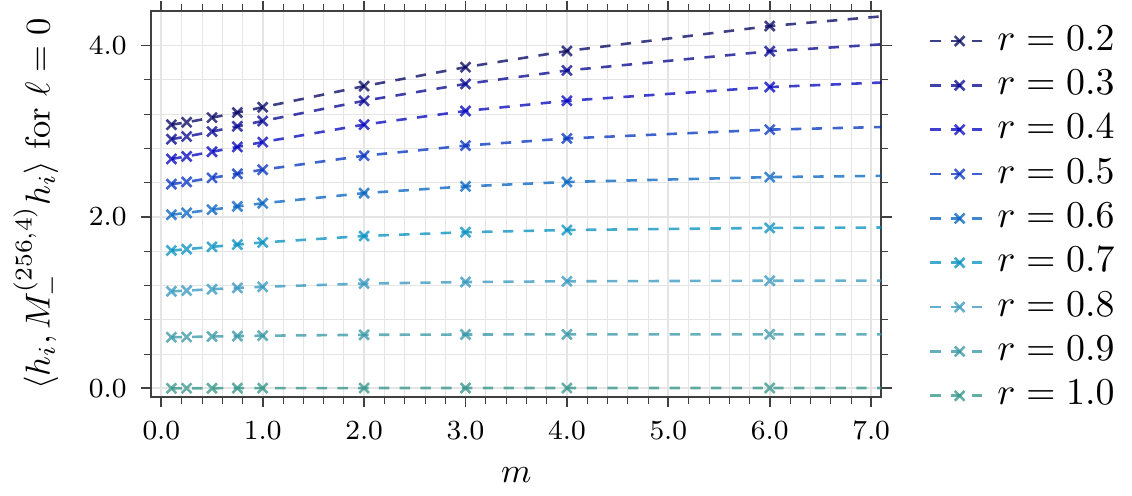}
		\caption{\label{fig:D4DoubleCone.Mm.pos.ell0} $M_-^{( n, b )}$ for a discretization with $n = 256$ and $b = 4$ and for parameter $\ell = 0$.}
	\end{subfigure}
	\\[2em]
	\begin{subfigure}{0.9\textwidth}
		\centering
		\includegraphics{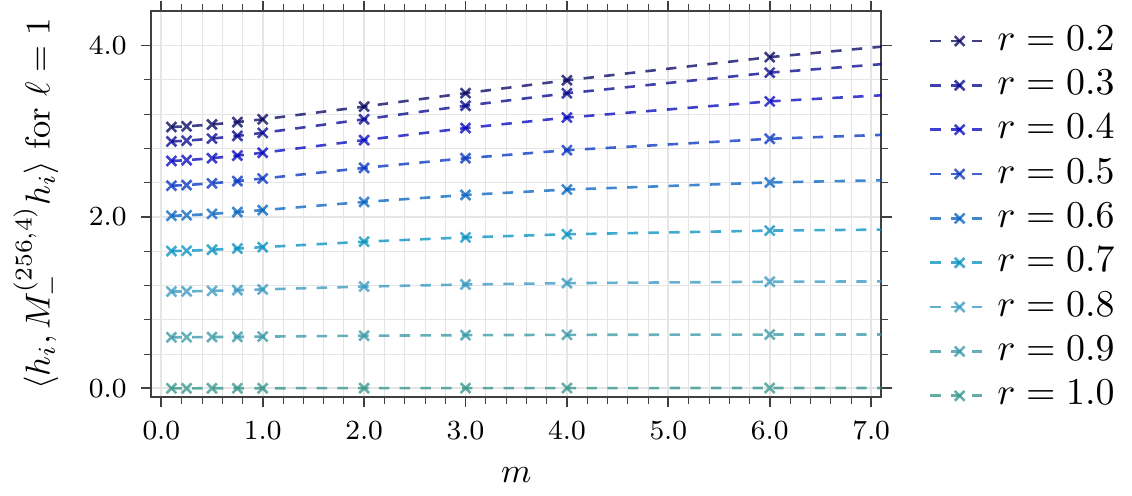}
		\caption{\label{fig:D4DoubleCone.Mm.pos.ell1} $M_-^{( n, b )}$ for a discretization with $n = 256$ and $b = 4$ and for parameter $\ell = 1$.}
	\end{subfigure}
	\caption{\label{fig:D4DoubleCone.Mm.pos} Different perspective on the mass dependence as shown in \autoref{fig:D4DoubleCone.Mm.mass}, here at different positions $\mu_{i} = r$.}
\end{figure}
Comparing the plots for $\ell = 0$ and $\ell = 1$ at equal masses, notice that there is also a minor dependence on the angular eigenvalue $\ell$, though this effect is less distinct.

\section{Conclusions}
\label{sec:Conclusion}

In this article, we have computed a numerical approximation to the modular generator for a double cone in the massive free field on $( 1 + 1 )$- and $( 3 + 1 )$-dimensional Minkowski space. Specifically, we approximated the component $M_-$ as in \eqref{eq:modGenFormula}. Using a finite-dimensional approximation of the one-particle structure, we discretized the kernel of $M_-$ in a basis of box functions in position space, reducing the problem to functional calculus of matrices.   

Our results in the $(1+1)$-dimensional case indicate that the leading contribution of $M_-$ is a multiplication operator, at least on the subspace corresponding to the interior of the double cone. 
However, if so, this multiplication operator must be mass-dependent, unlike expected in \cite{2020LongoMorsella,2022Longo}. Similarly in $3+1$ dimensions, $M_-$ seems to act at leading order as a mass-dependent multiplication operator on every subspace of fixed angular momentum $\ell$; 
if, however, the multiplier function also depends on $\ell$, as our results indicate, then $M_-$ is not a multiplication operator in the usual sense. 
Irrespective of that, the difference of the multiplication parts for different masses would have continuous spectrum, making it unlikely that compact perturbation methods \cite{2022Longo} are applicable.

Apart from the mass dependence of the multiplication or ``diagonal'' part, one may ask whether the modular Hamiltonian of the double cone has an additional, non-diagonal contribution to its kernel, stemming perhaps from a non-geometric action of the modular group. We do not claim to resolve this question, but note that the non-diagonal values appear to be at least some orders of magnitude smaller than the diagonal ones, as might perhaps be expected. With our present resolution, we  are not able to confirm whether or not they are zero, leaving this as an open problem for ongoing research.

We also mention an application to relativistic quantum information theory:  In our situation, the relative entropy between the Fock vacuum $\omega$ and a coherent excitation $\omega_f(\cdot) :=\omega(W(f)^\ast \cdot W(f))$, $f\in \mathcal{H}$, with respect to the Weyl subalgebra $\mathfrak{W}(\mathcal{L})\subset \mathfrak{W}(\mathcal{H})$, is given by \cite{2020CiolliLongoRuzzi} 
\begin{subequations}
\begin{align}
			S_{\mathfrak{W}(\mathcal{L})}( \omega_f \mid\mid \omega )
	&= -\innerProd[\mathcal{H}]{f}{P^\ast \log \Delta f}
\\
	&= \innerProd[\subfrac{1}{4} \oplus -\subfrac{1}{4}]{%
				f
			}{%
				\left(
					\chi_{-\subfrac{1}{4}} M_+
				\oplus \chi_{\subfrac{1}{4}} M_-
				\right) f
			}
	;
\end{align}
\end{subequations}
thus our methods provide a numerical approximation of this quantity. 

Similar numerical studies of the entanglement Hamiltonian and entanglement entropy \cite{2020EislerDigiulioTonniPeschel,2022JaverzatTonni} start from a lattice approach to the quantum system, while we set out from the quantum field theory in the continuum. 
Besides this conceptual difference, we note that those authors obtain the entanglement Hamiltonian as an operator on the subspace $\mathcal{L}$, while our technique yields the modular operator for $\mathcal{L}$ acting on the full Hilbert space $\mathcal{H}$. 
An in-depth comparison of these two numerical approaches would be worthwhile, but requires further work.

While we verified the quality of our approximation in the test case of a $( 1 + 1 )$-dimensional wedge, where the analytical result is well known, we did not present any rigorous convergence proof. 
Certainly, it would be of interest to establish precise conditions under which the modular objects of a finite-dimensional one-particle structure converge to those of an infinite-dimensional one; we hope to return to this point elsewhere.

We restricted our attention to double cones in the scalar field, but the same numerical methods should apply to other space-time regions, other wave operators, and by purification methods (cf.~\cite{2022BostelmannCadamuroDelVecchio}) also to thermal states and to quasifree states of linear quantum fields in a gravitational background, yielding relevant results for relative entropies in these cases. 
However, in particular for non-symmetric regions in higher dimensions, this would require a performance optimization of the numerical algorithm, which we have not focussed on here. 
In particular, it would be worthwhile to find an alternative approach to the eigenvalue problem of the matrices $B^{(n)}$ that eliminates the need for high-precision floating point arithmetic.

\section*{Acknowledgements}
We would like to thank Roberto Longo and Gerardo Morsella for discussing their work on the modular operator for double cones with us, 
Ko Sanders for discussions on the modular data of one-particle structures, 
Detlev Buchholz for comments on preliminary results, 
and Robin Hillier for helpful suggestions regarding the presentation. 
D.C.\ and C.M.\ are supported by the Deutsche Forschungsgemeinschaft (DFG) within the Emmy Noether grant CA1850/1-1. H.B.~would like to thank the Institute for Theoretical Physics at the University of Leipzig for hospitality.

\printbibliography

\end{document}